\documentclass[12pt]{article}
\usepackage[utf8]{inputenc}
\usepackage{setspace}
\usepackage[plain]{fullpage}
\usepackage{amsmath,amsfonts,amssymb,amsthm}
\usepackage[authoryear, longnamesfirst]{natbib}
\usepackage{hyperref}
\usepackage{graphicx}
\usepackage{enumitem}
\usepackage{nicematrix}
\usepackage{xcolor}
\usepackage{algorithm}
\usepackage{thmtools}
\usepackage{algorithmic}
\usepackage{subcaption}
\usepackage{mathtools}
\usepackage{placeins}
\allowdisplaybreaks

\newtheorem{theorem}{Theorem}

\theoremstyle{remark}
\newtheorem{rem}{\protect\remarkname}
\providecommand{\remarkname}{Remark}

\title{Decision Theory for the Archetype Discovery Problem\thanks{We would like to thank Davide Viviano for sparking our interest in studying the archetype discovery problem. Montiel Olea gratefully acknowledges financial support by the National
Science Foundation Grant SES-2315600. }}
\author{Jos{\'e} Luis Montiel Olea \and Amilcar Velez \and Zhuoheng Xu \and Haomin Yu \and Shunqi Zhang\thanks{Department of Economics, Cornell University. Corresponding author: {{\href{mailto:sz672@cornell.edu}%
{sz672@cornell.edu}}}}} 
\date{May 17th, 2026}

\begin{document}

\onehalfspacing

\maketitle

\begin{abstract}
In the \emph{archetype discovery problem} a researcher wants to summarize $N$ heterogeneous policy effects of interest that vary over a discrete set of covariates. The goal is to partition the set of covariates into $K<N$ groups---the \emph{archetype sets}---and to provide a summary of the policy effects for each group. We use decision theory to show that, under a  weighted mean-squared-error criterion, a procedure analogous to the \emph{Sorted Group Average Treatment Effects} (GATES) solves the archetype discovery problem. The key difference is that, in the optimal procedure, archetype sets are obtained by weighted $K$-means clustering of the $N$ heterogeneous policy effects, instead of relying on $K$ equally-spaced quantiles. We show that the procedure that minimizes average risk for a given prior can be obtained by clustering the different values of the posterior mean estimate of the policy effects of interest. Similarly, an approximately minimax procedure in large samples can be obtained by clustering a consistent estimator of the policy effects. In both of these cases, an exact solution to the weighted $K$-means clustering problem can be found using a simple and well-known dynamic programming algorithm.  
\end{abstract}

\doublespacing

\section{Introduction}  

A researcher has access to a function $\phi$ that describes the variation of $N$ heterogeneous effects of a policy of interest over a discrete set of observable characteristics, $x \in \mathcal{X}$. The researcher would like to communicate this information to a policymaker that wishes to assess the impact of the policy over a group of heterogeneous individuals. Unfortunately, when $N$ is large, communicating $\phi$ to the policymaker could be challenging. The policymaker might prefer a simpler summary of these effects. For example, it is possible that the policymaker finds it simpler to interpret the \emph{Sorted Group Average Treatment Effects} (GATES) of \cite{Victor25}, where the policy effects in $\phi$ are first sorted to obtain a few quantiles and then $\phi$ is summarized by reporting averages of policy effects over the groups defined by such quantiles. The researcher thus faces a critical trade-off when communicating the policy effects: the reports should contain as much information as in the original function $\phi$ (that is, the reports should exhibit high \emph{fidelity}), but should also be easy to parse for the policymaker (the reports should be less \emph{complex} than the original function $\phi$).

In recent work, \cite{breza2025generalizability} develop a useful framework to think about how to navigate the aforementioned trade-off between high fidelity and low complexity. In particular, they consider the problem of how to partition the researcher's heterogeneous policy effects into $K$ groups of ``individual and environmental'' characteristics within which treatment effects are ``predictively stable.'' \cite{breza2025generalizability} refer to the groups of characteristics as \emph{archetype sets}, and to the problem of finding these groups as the problem of \emph{archetype discovery}. In an important departure from existing literature, they allow the researcher to pay a cost to \emph{admit ignorance} for some groups. 

In this paper, we apply statistical decision theory to the archetype discovery problem. We focus on the case in which the researcher is not allowed to admit ignorance and must summarize the policy effects for every observable characteristic. Since there are different recommendations in the literature regarding how to present and summarize heterogeneous policy effects, decision theory can guide their evaluation.\footnote{In addition to the GATES of \cite{Victor25} and the \emph{generalizability aware predictions} of \cite{breza2025generalizability}, other suggestions in the literature include the \emph{endogenous stratification} strategy studied by \cite{abadie2018endogenous}, the strategies based on decision trees in \cite{AtheyImbens2016,WagerAthey2018}, and the \emph{rashomon partitions} of \cite{venkateswaran2024robustly}.} Using statistical decision theory necessitates the specification of at least three ingredients: the menu of actions available to the researcher, the consequences of these actions as a function of a potentially unknown state of the world (i.e., the \emph{loss function}), and a statistical model which captures how the 
data distribution changes at each possible unknown state of the world. In terms of the action space, we focus on the case in which the researcher is only allowed to communicate a function $\bar{\phi}$ that takes at most $K$ values. This choice of action space ensures that the function reported by the researcher has low complexity relative to $\phi$, since $K$ will typically be small and, in particular, considerably smaller than the number of values that $\phi$ could take. Throughout the paper, we assume that the researcher takes $K$ as exogenously given. For the loss function, we assume that the main criterion used by the researcher in order to guide the construction of the report $\bar{\phi}$ is a typical weighted mean-squared error criterion. In particular, the loss associated to reporting the function $\bar{\phi}$ to summarize the original function $\phi$ takes the form:
\begin{equation}\label{eq:oracle_loss_noabst_intro}
L(\bar{\phi};\phi,p) 
\;\equiv\;
\sum_{x\in\mathcal{X}} p(x)\Big(\phi(x)-\bar{\phi}(x)\Big)^2,
\end{equation}
where $p(x)$ denotes a probability mass function over the discrete set $\mathcal{X}$ that is assumed to be known and chosen by the researcher. The loss function in \eqref{eq:oracle_loss_noabst_intro} encourages reports with high fidelity by penalizing the differences between $\phi$ and $\bar{\phi}$. 

Our first result concerns the \emph{oracle} choice of $\bar{\phi}$: that is, we are interested in characterizing the researcher's report that minimizes the loss function in \eqref{eq:oracle_loss_noabst_intro}, assuming that the policy effects in $\phi$ are known. Our Theorem \ref{thm:archetype discovery solution} shows that the oracle choice of $\bar{\phi}$ can be constructed by a procedure analogous to GATES. In the oracle procedure characterized by our theorem, the $N$ policy effects in $\phi$ are first sorted in increasing order and $K$ groups are then constructed by solving a weighted version of the popular $K$-means problem of \cite{mcqueen1967some} and \cite{lloyd1982least}. An important observation here is that the $K$ groups are based on clustering the $N$ scalar policy effects, not on the covariates, as in the recent work of \cite{kim2026causal}. It then follows that the main conceptual difference between the oracle solution and the popular GATES procedure is that in the oracle solution the groups are obtained by weighted $K$-means clustering of the policy effects, instead of relying on $K$ equally-spaced quantiles. Moreover, since the $K$ groups are defined by clustering the policy effects, the $k$-th archetype set in the oracle problem simply consists of all covariate values that map to a policy effect that belongs to the $k$-th cluster. The oracle report $\bar{\phi}$ is constant over each cluster, and the reported value is the cluster-conditional mean of $\phi$.

Before presenting the other results in the paper, it is worthwhile making two additional comments about the oracle choice of $\bar{\phi}$. First, a common critique of the GATES procedure (and of any other endogenous stratification procedure that creates groups based on outcomes) is that the resulting archetype sets could be \emph{``uninterpretable in terms of the original covariates, making it necessary to post-process the treatment effect distribution to gain insights about particular covariates''}; see the discussion in Section H.4 of \cite{venkateswaran2024robustly}. The proof of our Theorem \ref{thm:archetype discovery solution} shows that the interpretability (or lack thereof) of the oracle solution $\bar{\phi}$ crucially depends on the original policy effects $\phi$. It is perhaps convenient to explain this in the context of a simple example. Suppose that the original policy effects are described by a decision tree and that the covariate space is a finite subset of the unit square $[0,1]^2$. The different values of the policy effects partition the unit square in terms of axis-aligned rectangles. Interestingly, our results show that the archetype sets associated to the oracle solution will correspond to a \emph{coarsening} of this partition. This means that $\bar{\phi}$ inherits \emph{some} of the interpretability in $\phi$ (unions of axis-aligned rectangles), without having to impose interpretability as an additional constraint in the oracle problem. 

Second, since the $K$ groups in the oracle solution are obtained by solving a weighted $K$-means clustering problem of $N$ scalar policy effects (as opposed to solving a clustering problem involving a potentially high-dimensional vector of covariates), the exact solution to the oracle problem can be obtained in a computationally convenient way. It is known---see \cite{bruce1965optimum}, \cite{wu1989kn,wu1991optimal}, \cite{wang2011ckmeans}, \cite{gronlund2017fast}---that the weighted $K$-means clustering problem of $N$ sorted scalars can be solved in $O(KN^2)$ time using $O(KN)$ space by means of a simple dynamic programming algorithm that---for the sake of exposition---we present in Section \ref{subsec:dynamic_prog}. It is important to note that the key feature that enables the use of the dynamic programming algorithm is that the $K$ clusters of the policy effects are \emph{contiguous}: whenever two policy effects $\phi_n<\phi_{n'}$ are assigned to the same cluster, then any other policy effect that is ordered in between them will also be assigned to the same cluster. This means that the archetype sets in the oracle solution can be thought of as arising from a ``squashing'' of the level curves of the original function $\phi$.  

While our analysis of the oracle solution to the archetype discovery problem is helpful to understand how to summarize policy effects if one must, in practice, the function $\phi$ will need to be estimated from the data. Thus, we use statistical decision theory to understand if the variant of the GATES procedure that we have discussed so far is still a reasonable way of using sample data for archetype discovery. To this end, we first consider a Bayesian statistical decision problem, where we assume that the researcher observes a dataset $z \in \mathcal{Z}$ that is informative about $\phi$. The researcher has a (potentially nonparametric) statistical model $\{P_{\theta}\}_{\theta \in \Theta}$, where the index $\theta$ includes $\phi$, but potentially other nuisance parameters. We endow the researcher with a prior $\pi$ over $\Theta$. 

Our second result, Theorem \ref{thm:posterior_loss_minimization}, shows that---under the assumption that the model's parameter $\theta$ only enters the loss function through $\phi$---it is optimal to solve the archetype discovery problem by applying the dynamic programming algorithm to the posterior mean of $\phi$. This means that if the policy effects in $\phi$ are Conditional Average Treatment effects estimated from experimental or observational data,  one could use the Bayesian regression tree models for causal inference of \cite{hahn2020bayesian} based on the seminal work of \cite{hill2011bayesian}. The archetype sets can then be formed by clustering the posterior mean estimates of the conditional treatment effects. The resulting procedure is still analogous to GATES, but the key difference now is that the proxies for the treatment effects are based on a Bayesian posterior mean estimator, as opposed to a machine learning proxy. 

Our third result, Theorem \ref{thm:epsilon-minimax-archetypes}, dispenses the use of the prior $\pi$, and focuses on a researcher that is interested in minimizing worst-case risk. In order to analyze this problem, we need to be more explicit about the statistical model. To this end, we assume that the researcher has access to an estimator of the form
\begin{equation} \label{eq:estimator_intro} 
\hat\phi(x)=\phi(x)+\frac{\sigma_x}{\sqrt I}u_x,
\qquad
\{u_x\}_{x \in \mathcal{X}} \sim P,
\end{equation}
where we allow for distributions $P$ for which the marginals of $u_x$ have mean zero, variance one, and are subgaussian (with an optimal variance proxy of at most one). We treat $\sigma_x$ and $I$ as known. The hyperparameter $I$ plays a role analogous to the sample size: a large value of $I$ is interpreted as having a more precise estimator of $\phi(\cdot)$. We also note that we allow the error terms, $u_x$, to be correlated across the different values of the covariates $x \in \mathcal{X}$. Theorem \ref{thm:epsilon-minimax-archetypes} shows that applying the dynamic programming algorithm to the sorted values of $\hat{\phi}$ is approximately minimax (in a sense we make precise) provided $I$ is large enough. We also show in Theorem \ref{thm:minimax-rate} that the minimax \emph{regret} converges to zero at a rate that is at most of order $\sqrt{\log(|\mathcal X|)/I}$.

Finally, we consider the case in which---in addition to communicating the function $\bar{\phi}$ to the policymaker---the researcher can also provide a set of covariate values at which it is better to \emph{admit ignorance} or \emph{abstain}. We adopt the same loss function as in \cite{breza2025generalizability}, and just like in their framework, we consider a hyperparameter that measures how costly it is for the researcher to admit ignorance. We make a restriction on the type of abstention that the researcher can recommend to the policy maker. In particular, we require the abstention to \emph{respect} archetype sets: if the researcher recommends abstention for one covariate value in the $k$-th archetype set, then the researcher must necessarily recommend abstention for all the covariate values in such set. We think this is a reasonable restriction (consistent with the idea that the policymaker has some limited ability to parse complex functions). We show that this restriction also leads to reports $\bar{\phi}$ that cluster units based on their policy effects. If we further require the clusters to be contiguous (in a sense we make precise), we show there is a simple extension of our dynamic programming algorithm (that modifies the flow cost of the dynamic programming problem) that solves the archetype discovery problem with abstention.     

The rest of the paper is organized as follows. Section~\ref{section:basic_framework}
introduces the formal framework and characterizes the \emph{oracle} solution to the
archetype discovery problem (the best way of summarizing the information in $\phi$ when this function is known).  Section~\ref{section:unknown_phi} considers the case in
which the policy effects of interest are unknown and studies data-driven rules under
the average-risk and worst-case-risk criteria. Section~\ref{section:examples} presents
a simple numerical example to illustrate our results. 
Section~\ref{section:extension} discusses different extensions of the main results: in particular, we consider the archetype discovery problem with abstention, and also discusses alternative algorithms for solving the weighted $K$-means clustering problem. Section \ref{sec:application} revisits the application discussed in \cite{Victor25}: an experiment with the government of Haryana in North India designed to analyze the effects of a policy bundle that provided different incentives for immunization across several villages. We use this application to illustrate the differences between archetype sets constructed via weighted $K$-means clustering and archetype sets constructed using quantiles of the policy effects (as in the GATES procedure). Proofs
of the main results are collected in Appendix~\ref{app:main_results}. Additional results and supporting material are collected in Appendix ~\ref{app:support_materials}.

\section{Basic Framework} \label{section:basic_framework} 

\subsection{The archetype discovery problem}
A researcher has access to a mapping $\phi: \mathcal{X} \rightarrow \mathbb{R}$ that describes the effects of a policy of interest as a function of a discrete set of observable characteristics. The researcher would like to communicate the policy effects contained in $\phi$ to a policymaker, but the policymaker would prefer a simpler summary of these policy effects. 

Let $\{\phi_1, \ldots, \phi_N\}$ denote the $N \leq | \mathcal{X}|$ different values that the function $\phi$ takes. While the policymaker has a hard time parsing the information contained in $\phi$, the policymaker is comfortable processing and interpreting a function $\bar{\phi}: \mathcal{X} \rightarrow \mathbb{R}$ that takes only $K$ values, with $K\ll N$. The researcher is thus interested in finding a function $\bar{\phi}$ that provides a policy-relevant summary of the information in $\phi$. 

We refer to the set of characteristics associated to the $k$-th value of $\bar{\phi}$ as the $k$-th \emph{archetype set}. The problem of finding the archetype sets (and their corresponding values) was recently introduced by \cite{breza2025generalizability}. We follow their terminology and refer to this problem as the \emph{archetype discovery} problem. 

\subsection{Oracle solution to the archetype discovery problem} 

Let $p(x)$ denote a probability mass function over the discrete set $\mathcal{X}$. We follow \cite{breza2025generalizability} and assume that the main criterion used by the researcher in order to guide the construction of the summary $\bar{\phi}$ is a typical weighted squared loss. In particular, the loss of using $\bar{\phi}$ to summarize the original function $\phi$ takes the form:
\begin{equation}\label{eq:oracle_loss_noabst}
L(\bar{\phi};\phi,p) 
\;\equiv\;
\sum_{x\in\mathcal{X}} p(x)\Big(\phi(x)-\bar{\phi}(x)\Big)^2.
\end{equation}
Throughout, we assume, without loss of generality, that $p(x)>0$ for all $x \in \mathcal{X}$. We also assume that the researcher takes $K$---the number of values that the policymaker can comfortably parse---as given and tries to minimize \eqref{eq:oracle_loss_noabst} over all functions that take only $K$ values. More formally, let $\bar{\Phi}_{K}$ denote the set of all functions that map the set $\mathcal{X}$ to the real line and that take at most $K$ values; that is
\[ \bar{\Phi}_{K} \equiv \left\{ \bar{\phi}:\mathcal{X} \rightarrow \mathbb{R} \: \mid \: | \bar{\phi}(\mathcal{X})| \leq K   \right\}, \]
where $\bar{\phi}(\mathcal{X})$ denotes the image of the set $\mathcal{X}$ under the function $\bar{\phi}$. We say that $\bar{\phi}^*$ is an \emph{oracle solution} to the archetype discovery problem if 
\begin{equation} \label{eq:archetype discovery problem}
L(\bar{\phi}^*,\phi,p) = \inf_{\bar{\phi} \in \bar{\Phi}_{K}} L(\bar{\phi},\phi,p).
\end{equation}
Let $\bar{\phi}^*_{k}$ denote the $k$-th value of the oracle solution $\bar{\phi}^*$. Define the \emph{oracle $k$-th archetype set} associated to the oracle solution $\bar{\phi}^*$ as
\begin{equation} \label{eq:oracle_set}
\mathcal{A}^*_k \equiv \{ x \in \mathcal{X} \: \mid \: \bar{\phi}^*(x) = \bar{\phi}^*_k \}.
\end{equation}
Our first result shows that the oracle solution to the archetype discovery problem can be obtained by optimally grouping the $N$ values $\{ \phi_1, \ldots, \phi_N\}$ into $K$ \emph{clusters}. We will show that this can be done by solving the one-dimensional version of the classical $K$-means problem in \cite{mcqueen1967some} and \cite{lloyd1982least}. The signal processing literature refers to this problem as \emph{optimum $K:N$ quantization}; see \citet{wu1989kn,wu1991optimal} and the references therein. 

In order to establish our first result, assume (without loss of generality) that the values of $\phi$ have been sorted and that $\phi_1<  \ldots < \phi_{N}$. Define a $K$-\emph{clustering} of the values $\phi_1 < \ldots <\phi_{N}$ as a surjective function $c:\{1,\ldots, N\} \rightarrow \{1,\ldots, K\}$. We interpret $c(i)$ as the cluster assigned to the value $\phi_i$. The $k$-th cluster associated to the clustering function $c(\cdot)$ can then be denoted as $C_k \equiv \{\phi_i \in \{\phi_1, \ldots, \phi_N\} \: \mid \: c(i) = k \}$. 

Given a clustering function, let $\mu_{k}(c)$ denote the (conditional) mean of $\phi$ within cluster $k$; that is
\begin{equation} \label{eq:mus and pi} 
\mu_k(c) \equiv \frac{\sum_{\{i \: : \: c(i)=k\}} p_i \phi_i }{\sum_{\{i \: : \: c(i)=k\}} p_i}, \quad \textrm{where} \quad p_i \equiv \sum_{\{x\::\: \phi(x)=\phi_i\}}p(x). 
\end{equation}
Consider then the problem of clustering the values $\{\phi_1, \ldots, \phi_{N}\}$ into $K$ clusters in order to minimize weighted squared loss:
\begin{equation} \label{eq: k-means clustering}
\min_{c:\{1, \ldots, N\} \rightarrow \{1,\ldots, K\}} \sum_{k=1}^{K}  \left( \sum_{\{i\: : \: c(i)=k \}} p_i \: (\phi_i - \mu_k(c))^2 \right),
\end{equation} 
where the minimum is taken over all $K$-clustering functions. 
\begin{theorem} \label{thm:archetype discovery solution}
Let $i:\mathcal{X} \rightarrow \{1,\ldots, N\}$ be the function such that $\phi(x) = \phi_{i(x)}$. If $c^*$ solves the clustering problem in \eqref{eq: k-means clustering}, then the function 
\[ \bar{\phi}^*(x) = \mu_{c^*(i(x))}(c^*)   \]
is an oracle solution to the archetype discovery problem. Moreover, the $k$-th archetype set associated to the oracle solution $\bar\phi^*$ equals
\begin{eqnarray*}
\mathcal{A}^*_k &\equiv& \{ x \in \mathcal{X} \: \mid \: \bar{\phi}^*(x) = \bar{\phi}^*_k \} \\
&=& \{x \in \mathcal{X} \: | \: c^*(i(x))=k\}.
\end{eqnarray*}

\end{theorem}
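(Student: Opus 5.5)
The plan is to show that the infimum in \eqref{eq:archetype discovery problem} equals the minimum in \eqref{eq: k-means clustering}, and that the proposed $\bar\phi^*$ attains it. The argument is a sandwich: (i) any $\bar\phi\in\bar\Phi_K$ has loss at least the value of some $K$-clustering; (ii) the function built from $c^*$ achieves exactly the clustering objective at $c^*$.

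\textbf{Step 1: reduction to functions of $\phi$.} Take an arbitrary $\bar\phi\in\bar\Phi_K$ with values $v_1,\ldots,v_M$, $M\le K$, and level sets $B_m=\{x:\bar\phi(x)=v_m\}$. Define $\tilde\phi(x)=\arg\min_{v_m}(\phi(x)-v_m)^2$, breaking ties by smallest index. Then $\tilde\phi$ depends on $x$ only through $i(x)$, takes at most $K$ values, and satisfies $L(\tilde\phi;\phi,p)\le L(\bar\phi;\phi,p)$ pointwise in the sum. Writing $\tilde\phi(x)=w_{d(i(x))}$ for a map $d$ from $\{1,\ldots,N\}$ into at most $K$ labels, the loss becomes $\sum_i p_i(\phi_i-w_{d(i)})^2$, using the definition of $p_i$ in \eqref{eq:mus and pi}.

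\textbf{Step 2: optimal values and surjectivity.} For fixed $d$, the weighted within-group variance decomposition, with $p_i>0$, gives $\sum_{i:d(i)=k}p_i(\phi_i-w_k)^2\ge\sum_{i:d(i)=k}p_i(\phi_i-\mu_k(d))^2$. Hence the loss is at least the clustering objective at $d$.

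One subtlety is that $d$ need not be surjective onto $\{1,\ldots,K\}$, while \eqref{eq: k-means clustering} minimizes over surjections. Assuming $N\ge K$ (implicit in $K<N$), I will split nonsingleton clusters until there are exactly $K$ clusters. Splitting never increases the within-cluster sum of squares, since a finer partition with its own means is feasible for the same minimization. Thus the clustering optimum is at most the objective at the surjective refinement of $d$, which in turn is at most $L(\bar\phi;\phi,p)$. This establishes $\inf_{\bar\Phi_K} L\ge$ the value of \eqref{eq: k-means clustering}.

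\textbf{Step 3: attainment.} The function $\bar\phi^*(x)=\mu_{c^*(i(x))}(c^*)$ takes at most $K$ values, so $\bar\phi^*\in\bar\Phi_K$. Grouping the sum over $x$ by $i(x)$ shows that its loss equals the clustering objective at $c^*$. Combined with Step 2, this proves optimality.

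\textbf{Step 4: archetype sets.} The set identity requires that distinct clusters have distinct means $\mu_k(c^*)$. Otherwise two level sets of $\bar\phi^*$ would merge and $\{x:\bar\phi^*(x)=\bar\phi^*_k\}$ would be strictly larger than $\{x:c^*(i(x))=k\}$. I expect this to be the main obstacle. I would argue by contradiction: if $\mu_k=\mu_{k'}$ for $k\ne k'$, merge the two clusters, which leaves the cost unchanged, and split some other cluster containing two distinct values. With $N>K$ such a cluster exists after the merge, and an appropriate split strictly lowers the cost because $\phi_1<\cdots<\phi_N$ and the $p_i>0$. This contradicts the optimality of $c^*$, so the means are distinct. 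Labeling $\bar\phi^*_k\equiv\mu_k(c^*)$ then gives the stated equality of archetype sets.
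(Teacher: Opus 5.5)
Your proposal is correct and follows essentially the same route as the paper: the nearest-value reassignment in your Step 1 is the paper's lower-bound construction of $\bar\phi'$, and your Steps 2--3 match its separation of the cluster assignment from the cluster values, with the values optimized at the weighted means $\mu_k(c)$. Your explicit handling of surjectivity via splitting, and your merge-and-split argument that the means $\mu_k(c^*)$ are distinct, go beyond the paper: it leaves the surjectivity issue implicit and asserts the archetype-set identity without checking that distinct clusters cannot share a mean.
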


\begin{proof}
See Appendix \ref{subsec:archetype discovery solution} 
\end{proof}

Theorem \ref{thm:archetype discovery solution} shows that the archetype discovery problem in \eqref{eq:archetype discovery problem} can be solved in the following way. The researcher first needs to sort the $N$ different values of $\phi$. Then, the researcher optimally clusters these values into $K$ groups---by solving \eqref{eq: k-means clustering}. Note that the characteristics $x$ do not enter this process: the clustering is done over $\{\phi_1, \ldots, \phi_N\}$ not over $\mathcal{X}$. The solution to the archetype discovery problem, however, is a $K$-valued function $\bar{\phi}^*$ defined over $\mathcal{X}$ and the theorem shows that the $\bar{\phi}^*$ can be constructed as follows. For a given $x \in \mathcal{X}$ we first retrieve the index $i \in \{1, \ldots, N\}$ associated with the value that the original function $\phi(x)$ is assigned to. For example, suppose that when evaluated at $\tilde{x}$, we have $\phi(\tilde{x})= \phi_3$. Then $i(\tilde{x})=3$. Once we know that $\phi(x)$ takes the value $\phi_{i(x)}$ we look for the cluster to which $i(x)$ was assigned. Under the optimal clustering, this value is $c^*(i(x))$. Our result says that the solution to the archetype discovery problem can be obtained directly from the $K$ optimal clusters of the set $\{\phi_1, \ldots, \phi_N\}$ by simply reporting the (conditional) cluster mean associated to $c^*(i(x))$. That is $\bar{\phi}^*(x) = \mu_{c^*(i(x))}(c^*)$. 

We note that the solution to the archetype discovery problem described in Theorem \ref{thm:archetype discovery solution} is analogous to the \emph{Sorted Group Average Treatment Effects} (GATES) procedure described in \cite{Victor25}, assuming away the (machine learning) estimation of the heterogeneous effects. As explained by \cite{Victor25} (see Comment 3.6, p. 1135), the groups created by GATES are \emph{``based upon actual predicted treatment effect".} This means that the ``heterogeneity groups'' created by GATES are groups induced by the ML proxy predictor (see p. 1128 in their paper). When there is no sampling uncertainty, this is analogous to create groups based on the values of the policy effects contained in $\phi$. 

The key difference between the oracle solution in Theorem \ref{thm:archetype discovery solution} and the GATES procedure, is that the policy effects in $\phi$ are obtained by weighted $K$-means clustering of the $N$ heterogeneous policy effects, instead of relying on $K$ equally-spaced quantiles.\footnote{For instance, in their main application \cite{Victor25} estimate the GATES by quintiles of the ML proxies. See their Figure 4 in Section 6.2.}

\begin{rem}[\emph{The archetype sets ``squash'' the level curves of $\phi$}] \label{rem:measurability} 
For $i=1 \ldots N$, define the $i$-th \emph{level curve} of the function $\phi$ 
\[ G^{\phi}_i \equiv \phi^{-1}(\phi_i) = \{x \in \mathcal{X} \: | \: \phi(x)=\phi_{i}\}.  \]
Note that the $i$-th level curve $G^{\phi}_i$ collects the values of $x \in \mathcal{X}$ that satisfy $\phi(x)=\phi_i$. Note that the collection of level curves $G^{\phi} \equiv \{G_{1}^{\phi}, \ldots, G^{\phi}_N\}$ forms a partition of $\mathcal{X}$. Mathematically, this is the same as saying that $\phi$ induces a partition of the set of covariates $\mathcal{X}$ through its associated level curves. For illustration, suppose that there are two covariates $(x_1,x_2)$ that belong to the unit square $[0,1]^2$. Suppose that the function $\phi$ takes six values. Then, the function $\phi$ partitions the unit square into the six groups corresponding to the values of $(x_1,x_2)$ that evaluate to each of the values $\phi_i$. This is illustrated by Figure \ref{fig:running-example-thm1}.

Let $A^*_K$ be the set of all functions $a:\{1,\ldots, N\} \rightarrow \mathbb{R}$ such that $|a(\{1,\ldots, N\})| = K$ and consider then the set of functions 
\begin{equation} \label{eq:measurable_w.r.t_G}  
\bar{\Phi}^*_K(A_K,G^{\phi}) \equiv \left\{ \bar{\phi} \in \bar{\Phi}_{K} \: \Big | \: \bar{\phi}(x) = \sum_{i=1}^{N} a(i) \mathbf{1}\{x \in G^\phi_i\} \:  \textrm{for some} \: a \in A^*_K   \right \}.  
\end{equation}
Note that, by definition, any function in \eqref{eq:measurable_w.r.t_G} induces a \emph{coarser} partition over the set $\mathcal{X}$ than $\phi$. This means that if two vectors $x$ and $x'$ were in the same group $G^{\phi}_{i}$, then any function $\bar{\phi}$ belongs to \eqref{eq:measurable_w.r.t_G} must satisfy $\bar{\phi}(x)=\bar{\phi}(x')$.\footnote{More formally, it can be shown that the set $\bar{\Phi}^*_K(A_K,G^{\phi})$---which is a strict subset of $\Bar{\phi}_{K}$ coincides with the set of all functions that are measurable with respect to the $\sigma$-algebra generated by the partition $G^{\phi}$ and that take $K$ values.} Figure \ref{fig:running-example-thm1-measurable} gives an example of a partition that is coarser than the one induced by the function $\phi$ taking only six values.  Figure \ref{fig:running-example-thm1-nonmeasurable} gives an example of a partition that is not coarser, which means it cannot be generated by a function $\bar{\phi}$ in \eqref{eq:measurable_w.r.t_G}.

Note that Theorem \ref{thm:archetype discovery solution} implicitly shows that there exists an oracle solution of \eqref{eq:archetype discovery problem} is an element of $\bar{\Phi}^*_K(A_K,G^{\phi})$. Moreover, since it is known that the solution of the clustering problem in \eqref{eq: k-means clustering} produces \emph{contiguous} clusters (in the sense that if $\phi_n<\phi_{n'}$ are assigned to same cluster so will any intermediate value), then the archetype sets in the oracle solution will be given by a  ``squashing'' of the original ``level curves'' of the problem. This also means that the resulting archetype sets will inherit any interpretability associated to the original level curves of $\phi$. \qed 
\end{rem}

Finally, it is worthwhile to give a brief interpretation of our Theorem \ref{thm:archetype discovery solution} in terms of the well-known signal processing literature on optimum quantization; see \cite{wu1989kn,wu1991optimal}. Using the terminology from this literature, our Theorem \ref{thm:archetype discovery solution} says that the optimal low-complexity report of $\phi$ is the optimum $K$-point scalar quantizer of the push-forward distribution of $X \sim p(\cdot)$ by $\phi(\cdot)$, with archetype sets given by inverse images of quantizer cells.

\begin{figure}[t!]
    \centering
    \begin{subfigure}[t]{0.45\textwidth}
        \centering
        \includegraphics[width=\textwidth]{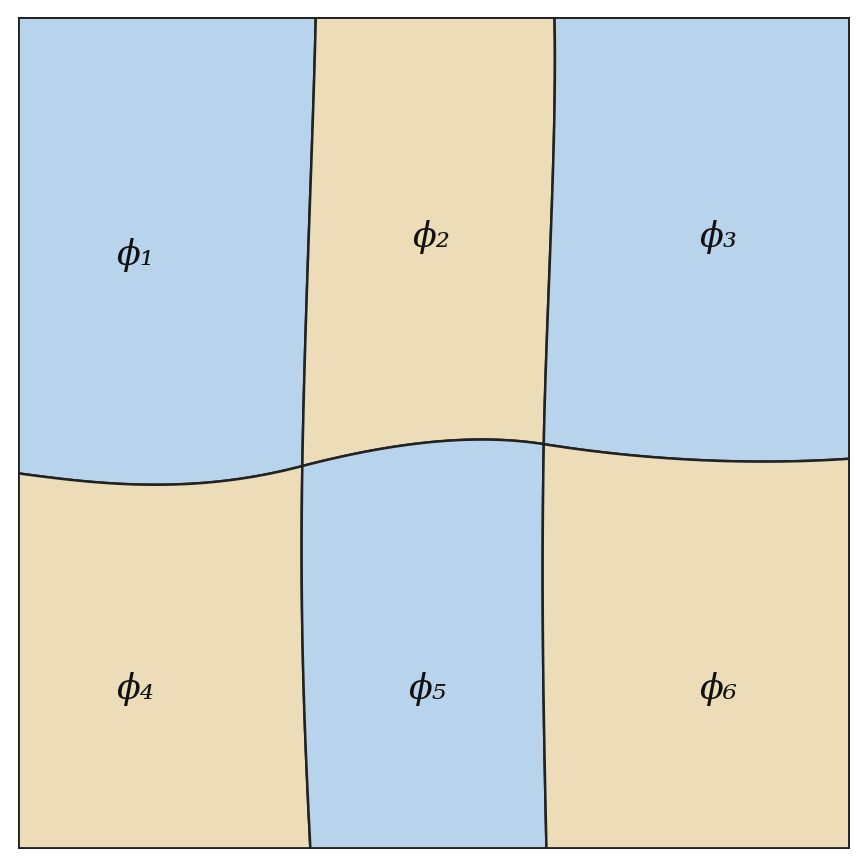}
        \caption{Coarser partition than the one induced by $\phi$.}
        \label{fig:running-example-thm1-measurable}
    \end{subfigure}
    \hfill
    \begin{subfigure}[t]{0.45\textwidth}
        \centering
        \includegraphics[width=\textwidth]{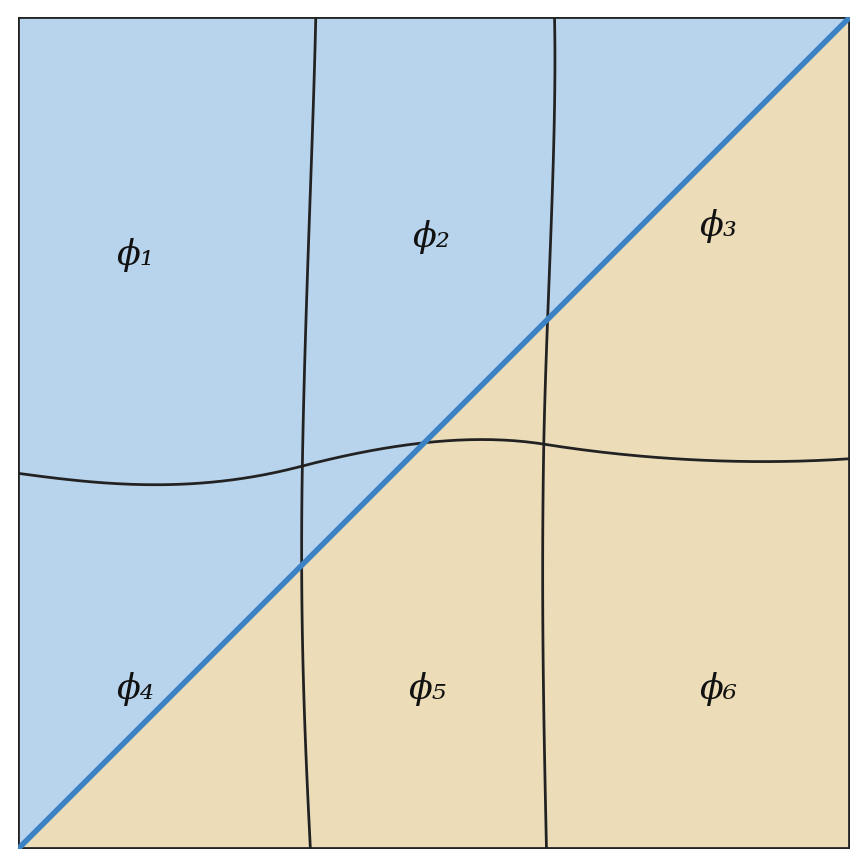}
        \caption{A partition that is not coarser than the one induced by $\phi$.}
        \label{fig:running-example-thm1-nonmeasurable}
    \end{subfigure}
    \caption{} 
    \label{fig:running-example-thm1}
\end{figure}

\subsection{Implementation of the oracle solution to the archetype discovery problem using dynamic programming} \label{subsec:dynamic_prog}

It is known that an exact solution to the one-dimensional clustering problem in \eqref{eq: k-means clustering} can be found using dynamic programming. The algorithm we present below is based on the seminal work on optimum quantization of \cite{bruce1965optimum}. We provide matlab and python scripts that implement such algorithm.\footnote{\cite{wang2011ckmeans} provide an implementation in R.} \cite{bruce1965optimum}'s algorithm performs well in our applications (and runs in a fraction of a second), but we note that there are other algorithms for finding an exact solution to the one-dimensional clustering problem in \eqref{eq: k-means clustering} that can improve both the runtime and the space requirements of the dynamic programming suggested in this paper. See \cite{gronlund2017fast} and the references therein. We decided to focus on \cite{bruce1965optimum}'s algorithm because of its simplicity.   

\emph{Dynamic Programming algorithm.} We are interested in grouping $\phi_1 < \ldots  <\phi_{N}$ into $K$ clusters using the weighted $K$-means objective function in \eqref{eq: k-means clustering}. The first key observation is that we can focus on clustering functions $c:\{1,\ldots, N\} \rightarrow \{1,\ldots, K\}$ that produce  \emph{contiguous} clusters: if $\phi_n < \phi_{n'}< \phi_{n{''}}$ and $c(n)=c(n'')=k$, then $c(n')=k$.  Note that any such clustering function can be characterized by a collection of $K-1$ ordered integers $q_0 \equiv 0 < q_1 < \ldots < q_{K-1}<q_{K} \equiv N$ where $c(i)=k$ if and only if $q_{k-1} < i \leq q_{k}$. The $k$-th cluster is thus given by $\{ \phi_i \: \mid \: q_{k-1} < i \leq q_{k}  \}$. Setting $q_0 \equiv 0$ and $q_K \equiv N$ guarantees that the $K$ clusters are well defined. 

Note that cardinality of the set of all contiguous clusters is given by the combination formula of $(N-1)$ choose $(K-1)$. A naive algorithm that works via enumeration becomes intractable for modest case and $K \ll N$ and $N$ is large. For example, if $N=100$ and $K=6$, the total number of elements is 715,231,440.

A key observation that allows for a dynamic programming algorithm to find the optimal clustering scheme is the following. Let $\widehat{q}^K_{N} = (q_1, \ldots, q_{K-1})$ be the vector that contains the $K-1$ ordered thresholds that define the optimal $K$ clusters. Lemma 1 in \cite{wu1989kn} shows that in the case of multiplicity of solutions, we can always select $\widehat{q}^K_{N}$ to be the solution with the largest thresholds (the set of solutions is totally ordered under the usual vector ordering). Let $q_{t}$ denote the $t$-th ($\leq K-1$) entry of the vector $\widehat{q}^K_{N}$. As discussed above the threshold $q_t$ defines the $t$-th cluster for $\phi_1< \ldots< \phi_N$. Consider the points $\phi_1< \ldots< \phi_{q_t}$. Lemma 2 in \cite{wu1989kn} shows that $(q_1, q_2, \ldots, q_{t-1})$ still characterize the optimal way of clustering the $q_t$ points $\phi_1< \ldots< \phi_{q_t})$ into $t$ clusters. 

This means that the clustering problem admits a recursive representation. For any $1 \leq k \leq K$ and $k \leq n \leq N$, let $D(k,n)$ denote the weighted mean-squared error of optimally grouping $\phi_1< \ldots< \phi_n)$ into $k$ clusters. For $n \leq n' \leq N$, let $C(n,n')$ be the squared-error of assigning points $\phi_{n} \leq \ldots \leq \phi_{n'}$ into one cluster. That is 
\[ C(n,n') \equiv \sum_{i=n}^{n'} p_i \left( \phi_i-\mu_{n,n'} \right)^2, \quad \textrm{ where } \mu_{n,n'} \equiv \frac{\sum_{i=n}^{n'} p_i \phi_i}{\sum_{i=n}^{n'} p_i}. \]
Note that Lemma 2 in \cite{wu1989kn} implies
\[ D(k,n) = \min_{a \in \{k,\ldots, n\}}  C(a,n) + D(k-1,a-1). \]
This defines a dynamic programming problem in the sense of \cite{bradley1977applied}. The \emph{state} of the problem is a vector $(k,n)$ where $k \leq n$, $1 \leq k \leq K$, and $k \leq n \leq N$. The state can be interpreted as making reference to the $n$ points $\phi_1< \ldots< \phi_n$ that have not been clustered and that need to be grouped into $k$ clusters. The action (or decision) in each state, denoted by $a \in \{k,\ldots, n\}$, refers to the points $\{\phi_a , \ldots, \phi_n\}$ that will be assigned to the right-most cluster. The \emph{flow} payoff (or more precisely, the flow \emph{cost}) is simply $C(a,n)$. The transition function maps a state $(k,n)$ and an action $a$ into the new state $(k-1,a-1)$, where we need to cluster $\{\phi_1, \ldots, \phi_{a-1}\}$ into $k-1$ clusters. 

It is known that there are other slightly more complicated algorithms that can improve the time and space requirements of the baseline dynamic programming algorithm; see Section 1.2 in \cite{gronlund2017fast}. In the numerical exercises where $N$ is in the thousands and $K$ is less than or equal to $10$, the runtime of the dynamic programming algorithm we have just discussed is almost a third of a second in a personal laptop, and therefore we recommend this procedure. 

It is also useful to contrast this exact one-dimensional solution of the $k$-means clustering problem with approximate algorithms commonly used for multidimensional $k$-means problems. We note that in our framework there is no need to use heuristic algorithms to solve the $K$-means problem. The popular iterative procedures of \cite{lloyd1982least}-\cite{forgy1965cluster}, and \cite{hartigan1979algorithm} have been applied to other clustering problems in econometrics; e.g., \cite{bonhomme2015grouped} and \cite{bonhomme2019distributional}. However, as noted by \cite{gronlund2017fast}, even for clustering problems  in one dimension, \emph{``Lloyd’s algorithm does not necessarily compute the optimal clustering, it is only a heuristic.}''

\section{Archetype Discovery problem when $\phi$ is unknown}  
\label{section:unknown_phi}
\subsection{Statistical Decision Theory for archetype discovery} 

In the previous section we described an oracle (or population) version of the archetype discovery problem, where we assumed that the researcher knew the function $\phi: \mathcal{X} \rightarrow \mathbb{R}$ describing the effects of the policy of interest. In this section we consider the case in which $\phi$ is potentially unknown to the researcher, but we assume there is a dataset available that is informative about this parameter. The question of interest is how to use the data to provide a summary $\bar{\phi}(\cdot)$ to the policymaker.  

Suppose that the researcher observes a dataset $z \in \mathcal{Z}$. As usual, we view the data as the realization of a $\mathcal{Z}$-valued random variable. The researcher has a statistical model $\{P_{\theta}\}_{\theta \in \Theta}$, where the parameter $\theta$ includes $\phi$ and potentially other nuisance parameters. But throughout this section, we maintain the assumption that the weights $p(\cdot)$ are known. These weights encode the populations or policy weights according to which approximation errors are evaluated. 

Just as we did before, we assume that the researcher takes $K$---the number of values that the policymaker can comfortably parse---as given. The main goal of the researcher is to provide a data-driven summary of the function $\phi$. The loss function that the researcher uses takes the form
\[ L(\bar{\phi},\theta) = L(\bar{\phi};\phi,p) 
\;\equiv\;
\sum_{x\in\mathcal{X}} p(x)\Big(\phi(x)-\bar{\phi}(x)\Big)^2. \]
This means that we are assuming that the model's parameter $\theta$ only enters the loss function through $\phi$. We denote any data-driven summary as a function $d: \mathcal{Z} \rightarrow \bar{\Phi}_{K}$. We use the standard terminology in statistical decision theory---see, for example, \cite{Ferguson67}---and refer to $d$ as the researcher's \emph{decision rule}. The risk of a decision rule at $\theta$ is defined as
\begin{equation} \label{eq:risk_function} 
R(d,\theta) \equiv \mathbb{E}_{Z \sim P_{\theta}} [ L(d(Z),\phi,p)].
\end{equation}

\subsection{Archetype discovery by minimizing average risk}  
Consider first the case where the researcher has a prior distribution $\pi$ over the parameter space $\Theta$. And consider the problem of finding a decision rule that minimizes average risk: $\mathbb{E}_{\theta \sim \pi}[R(d,\theta)]$. It is known that such a decision rule can be found by minimizing posterior loss (c.f., \cite{berger1985statistical}, Result 1, p. 159); that is, for each $z \in \mathcal{Z}$, set $d(z)$ to be any summary function $\bar{\phi}$ that solves the minimization problem
\begin{equation} \label{eq:posterior loss minimization}
\inf_{\bar{\phi} \in \bar{\Phi}_{K} } \mathbb{E}_{\phi \sim \pi} [L(\bar{\phi},\phi,p) \: | \: z  ].  
\end{equation}
Our next result shows that the posterior loss minimization problem in  \eqref{eq:posterior loss minimization} can be solved by applying the dynamic programming algorithm to the posterior mean of $\phi$. 

\begin{theorem} \label{thm:posterior_loss_minimization}
Any decision rule $d^*: \mathcal{Z} \rightarrow \bar{\Phi}_{K}$ for which $d^*(z)$ is a solution to the problem
\[ \inf_{\bar{\phi} \in \bar{\Phi}_{K}} L( \bar{\phi} , \mathbb{E}_{\phi \sim \pi}[\phi(x) \: | \: z] , p  ), \]
minimizes average risk with respect to $\pi$. 
\end{theorem}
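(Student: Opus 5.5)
The approach is a bias--variance decomposition of the posterior expected loss, followed by the standard Bayes-rule argument that minimizing posterior loss pointwise in $z$ minimizes average risk.

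Fix $z \in \mathcal{Z}$ and write $m_z(x) \equiv \mathbb{E}_{\phi\sim\pi}[\phi(x)\mid z]$. We assume the posterior has finite second moments, so that $m_z(x)$ and the posterior variance $\mathrm{Var}(\phi(x)\mid z)$ are finite for each $x$. Otherwise the average risk is infinite for every rule and the statement is vacuous.

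\textbf{Step 1 (decomposition).} For an arbitrary $\bar\phi \in \bar\Phi_K$, expand each summand by adding and subtracting $m_z(x)$. Since $\mathcal{X}$ is finite, the sum and the conditional expectation can be interchanged. The cross term $2p(x)\,(m_z(x)-\bar\phi(x))\,\mathbb{E}[\phi(x)-m_z(x)\mid z]$ vanishes because $\bar\phi(x)$ is nonrandom given $z$. This yields
\[
\mathbb{E}_{\phi\sim\pi}[L(\bar\phi,\phi,p)\mid z] \;=\; \sum_{x\in\mathcal{X}} p(x)\,\mathrm{Var}(\phi(x)\mid z) \;+\; L(\bar\phi, m_z, p).
\]
The first term does not depend on $\bar\phi$.

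\textbf{Step 2 (same minimizers).} Because the two objectives differ by a constant in $\bar\phi$, for each $z$ the set of minimizers of the posterior loss in \eqref{eq:posterior loss minimization} coincides with the set of minimizers of $L(\bar\phi, m_z, p)$ over $\bar\Phi_K$. Existence of a minimizer follows from Theorem \ref{thm:archetype discovery solution} applied with $m_z$ in place of $\phi$. That theorem also shows that the minimizer is obtained by clustering the distinct values of $m_z$, which connects the result to the dynamic programming algorithm of Section \ref{subsec:dynamic_prog}.

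\textbf{Step 3 (average risk).} By Fubini/Tonelli (the loss is nonnegative),
\[
\mathbb{E}_{\theta\sim\pi}[R(d,\theta)] = \mathbb{E}_{z}\big[\mathbb{E}[L(d(z),\phi,p)\mid z]\big],
\]
where the outer expectation is over the marginal distribution of $z$. For any rule $d$, the inner term is at least the pointwise infimum, and $d^*$ attains that infimum at every $z$ by Step 2. Hence the average risk of $d^*$ is no larger than that of any $d$. This is the cited result of \cite{berger1985statistical}.

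\textbf{Main obstacle.} The only delicate point is technical rather than conceptual. One needs $z\mapsto d^*(z)$ to be measurable, so that its risk is defined. Because Theorem \ref{thm:archetype discovery solution} produces finitely many candidate clusterings, I would select $d^*$ by a fixed tie-breaking rule, for example the largest thresholds as in \cite{wu1989kn}. The resulting rule is then a measurable function of the measurable map $z\mapsto m_z$. Alternatively, I would simply interpret the statement as applying to measurable selections.
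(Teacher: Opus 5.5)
Your proposal is correct and follows the paper's proof. The paper likewise adds and subtracts the posterior mean, which splits the posterior loss into a posterior-variance term that does not depend on $\bar\phi$ plus the oracle loss $L(\bar\phi,\widehat\phi,p)$, and it then relies on the standard result (Berger, Result 1) that minimizing posterior loss for each $z$ minimizes average risk; your Tonelli argument in Step 3 and your remarks on finite posterior second moments and measurable selection spell out what the paper leaves implicit by citing that result.
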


\begin{proof}
See Appendix \ref{subsec:posterior_loss_minimizaation}
\end{proof}
This simple result shows that in order to find the decision rule $d^*$ that minimizes average risk it suffices to solve an \emph{oracle} archetype discovery problem where we pretend that the posterior mean function, $\mathbb{E}_{\phi \sim \pi}[\phi(x) \: | \: z]$, is the true function $\phi$. The results in the previous section then imply that if we define
\[ \widehat{\phi}(x) \equiv \mathbb{E}_{\phi \sim \pi}[\phi(x) \: | \: z]  \]
and sort its image as $\widehat{\phi}_1< \ldots < \widehat{\phi}_N$ (where $N \leq |\mathcal{X}|$), we can use the dynamic programming algorithm to optimally group $\widehat{\phi}_1< \ldots < \widehat{\phi}_N$ in $K$ clusters.

We think this result is interesting for two reasons. First, Theorem \ref{thm:posterior_loss_minimization} provides a simple (and principled) approach to deal with the fact that, in applications, the function $\phi$ will typically be unknown. Second, the proof of Theorem \ref{thm:archetype discovery solution} shows that $d^*(z)$ will be measurable with respect to the $\sigma$-algebra generated by $\widehat{\phi}$. This means that if the posterior mean function has a special structure in its domain (for example, if $\widehat{\phi}$ is a decision tree over $\mathcal{X}$), then archetype sets corresponding to $d^*(z)$ will be a coarsening over the partitions in $\mathcal{X}$ induced by the function $\widehat{\phi}$.

\subsection{$\epsilon$-minimax solution of the archetype discovery problem}  \label{subsection:epsilon-minimax} 

Consider now the problem of finding the decision rule that minimizes worst-case risk. That is, we are interested in solving the problem
\[ \adjustlimits \inf_{d} \sup_{\theta \in \Theta} R(d,\theta),   \]
where the risk function is defined in \eqref{eq:risk_function}, and $\theta$---which includes $\phi$---will denote the parameter of the statistical model used for the available data.

In order to make progress with the minimax problem, we focus on the case in which the available data consist of an estimator of the unknown function $\phi$. More precisely, suppose that we have an estimator of the form
\begin{equation} \label{eq:estimator} 
\hat\phi(x)=\phi(x)+\frac{\sigma_x}{\sqrt I}u_x,
\qquad
\{u_x\}_{x \in \mathcal{X}} \sim P. 
\end{equation}
We define the parameters of this statistical model to be $\theta \equiv (\phi,P)$ and we treat $\sigma_x$ and $I$ as known. The hyperparameter $I$ plays a role analogous to the sample size: a large value of $I$ is interpreted as having a more precise estimator of $\phi(\cdot)$. We also note that we are allowing for the error terms, $u_x$, to be correlated across the different values of the covariates $x \in \mathcal{X}$. 

\begin{rem} [Consistency of $\widehat{\phi}$]
It is important to note that the key aspect of the statistical model used in this section is the presence of a \emph{consistent} estimator of the policy effects, and not the rate at which the policy effects are estimated. While we have decided to work with a standard parametric rate, the results in this section go through replacing $\sqrt{I}$ by an arbitrary rate $r_{I}$ that diverges to infinity as $I$ grows large. The statistical model that we wanted to capture with \eqref{eq:estimator} was a simple model with discrete covariates where $P$ is a multivariate normal distribution with independent components. While a model like this might be plausible under some assumptions,  we remind the reader that, as noted by \cite{Victor25} in the case in which the policy effects are conditional average treatment effects, generic estimators cannot be regarded as consistent, unless further assumptions are made.  \qed
\end{rem} 

Fix an arbitrary constant $B>0$ and define the parameter space $\Theta$ to be any subset of
\begin{eqnarray} \label{eq: parameter space}
\Theta(B) \equiv \{ (\phi,P) \: &|& \: \textrm{for all } x \in \mathcal{X} \textrm{ and } t\in \mathbb{R} \quad  \phi(x) \in [-B,B] \nonumber  \\
&& \mathbb{E}_{P}[u_x]=0, \mathbb{E}[u^2_x]=1, \:  \mathbb{E}_{P}[\exp(t u_x)] \leq \exp(t^2/2) \}. 
\end{eqnarray}
The set $\Theta(B)$ only contains functions $\phi$ that are bounded in absolute value by $B$. In addition, the set $\Theta(B)$ only allows for distributions $P$ for which the marginals of $u_x$ have mean zero, variance one, and are subgaussian; e.g. see definition in \cite{rigollet2015}, \cite{Vershynin2018}, \cite{rivasplata2012} (with an optimal variance proxy of at most one). 

We are not aware of how to find the minimax rule when $\Theta = \Theta(B)$, or even when $\Theta$ is a strict subset of $\Theta(B)$ that only allows for error terms that are independent standard Gaussian random variables. Instead of insisting in finding an exact solution, we search for a solution that \emph{approximately solves} the minimax problem when $I$ is large enough.  

To this end, let $\bar{\Phi}_K(B) \equiv \{ \bar{\phi} \in \bar{\Phi}_K \: | \: \bar{\phi}(x) \in [-B,B]\}$. Define the plug-in decision rule, $d_{\textrm{plug-in}}$, to be any decision rule such that
\begin{equation} 
d_{\textrm{plug-in}}(\widehat{\phi})
\in
\arg\min_{\bar\phi\in\bar\Phi_K(B)}
L(\bar\phi;\hat\phi,p).
\label{eq:plugin-rule-minimax}
\end{equation}
Let $\mathcal{D}(B)$ be the set of all decision rules that map $\widehat{\phi}$ to $\bar{\Phi}_{K}(B)$. 
The minimax value of interest---as a function of $I$ and the parameter space $\Theta \subseteq \Theta(B)$---is
\begin{equation}
V(I,\Theta)
\equiv
\adjustlimits \inf_{d \in \mathcal{D}(B)} \sup_{\theta \in \Theta}R(d,\theta), \quad R(d,\theta)\equiv \mathbb{E}_{\widehat{\phi} \sim (\phi,P)}\left [L(d(\widehat{\phi});\phi,p)\right].
\label{eq:minimax-value-archetypes}
\end{equation}
denote the finite-sample minimax benchmark. The following theorem shows that the plug-in rule is \(\varepsilon\)-minimax (in the sense of \cite{Ferguson67}, Chapter 1, Definition 4, p.33) for all sufficiently large values of \(I\).

\begin{theorem}[$\varepsilon$-minimaxity of the plug-in rule]
\label{thm:epsilon-minimax-archetypes}
Fix $B$ and $\{\sigma_{x}\}_{x \in \mathcal{X}}$. Suppose that $\widehat{\phi}$ is generated according to the statistical model \eqref{eq:estimator}. Let $\Theta$ be an arbitrary nonempty subset of $\Theta(B)$. If an exact minimax rule exists for large enough $I$, then for every \(\varepsilon>0\), there exists \(I(\varepsilon)\) such that for all \(I\ge I(\varepsilon)\),
\[
V(I,\Theta) \le \sup_{\theta \in \Theta}R(d_{\textrm{plug-in}},\theta)\le V(I,\Theta) +\varepsilon.
\]
\end{theorem}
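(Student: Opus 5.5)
The left inequality is immediate, because the plug-in rule belongs to $\mathcal{D}(B)$. The cluster means in Theorem \ref{thm:archetype discovery solution} are weighted averages of values of $\hat\phi$, but these need not lie in $[-B,B]$. So I would simply take $d_{\textrm{plug-in}}$ as defined by \eqref{eq:plugin-rule-minimax} over $\bar\Phi_K(B)$; a minimizer exists because the problem is a continuous minimization over a compact set for each fixed partition, and there are finitely many partitions. The whole proof then rests on comparing both the minimax value and the plug-in risk to the oracle benchmark
\[ J(\phi) \equiv \min_{\bar\phi\in\bar\Phi_K(B)} L(\bar\phi;\phi,p). \]

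\emph{Step 1 (lower bound).} For any $d\in\mathcal{D}(B)$ and any realization of $\hat\phi$, we have $d(\hat\phi)\in\bar\Phi_K(B)$, so $L(d(\hat\phi);\phi,p)\ge J(\phi)$ pointwise. Taking expectations and then suprema gives $R(d,\theta)\ge J(\phi)$ and hence $V(I,\Theta)\ge \sup_{(\phi,P)\in\Theta} J(\phi)$. Since $|\phi|\le B$, Theorem \ref{thm:archetype discovery solution} shows that the oracle cluster means lie in $[-B,B]$. Therefore $J(\phi)$ also equals the unrestricted oracle value, although this is not essential.

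\emph{Step 2 (upper bound on plug-in risk).} Work with the weighted norm $\|f\|_p\equiv(\sum_x p(x)f(x)^2)^{1/2}$, so that $L(\bar\phi;\phi,p)=\|\phi-\bar\phi\|_p^2$. Let $\hat d=d_{\textrm{plug-in}}(\hat\phi)$, let $\bar\phi^*$ be an oracle minimizer for $\phi$ in $\bar\Phi_K(B)$, and let $e\equiv\hat\phi-\phi$. Optimality of $\hat d$ for $\hat\phi$ and the triangle inequality give
\[ \|\phi-\hat d\|_p\le \|e\|_p+\|\hat\phi-\hat d\|_p\le \|e\|_p+\|\hat\phi-\bar\phi^*\|_p\le 2\|e\|_p+\sqrt{J(\phi)}. \]
Squaring and taking expectations, and using Cauchy–Schwarz $\mathbb{E}\|e\|_p\le(\mathbb{E}\|e\|_p^2)^{1/2}$, yields
\[ R(d_{\textrm{plug-in}},\theta)\le J(\phi)+4\sqrt{J(\phi)}\,s/\sqrt I+4s^2/I, \qquad \text{where } s^2\equiv\sum_x p(x)\sigma_x^2. \]
This uses $\mathbb{E}_P[u_x^2]=1$ and holds whatever the correlation structure of the $u_x$. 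Since $J(\phi)\le \|\phi\|_p^2\le B^2$ (take $\bar\phi\equiv 0$), the bound is uniform over $\Theta$:
\[ \sup_\theta R(d_{\textrm{plug-in}},\theta)\le \sup_\theta J(\phi)+\frac{4Bs}{\sqrt I}+\frac{4s^2}{I}\le V(I,\Theta)+\frac{4Bs}{\sqrt I}+\frac{4s^2}{I}. \]

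\emph{Step 3.} Choose $I(\varepsilon)$ large enough that the remainder $4Bs/\sqrt I+4s^2/I$ is at most $\varepsilon$; this holds for all $I\ge I(\varepsilon)$. The existence of an exact minimax rule is not actually needed for this argument. The subgaussian condition also goes unused here; it only matters for the rate result in Theorem \ref{thm:minimax-rate}.

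I expect the main obstacle to be bookkeeping rather than any hard estimate. The key points are: comparing both quantities to the same benchmark $J$, defined over $\bar\Phi_K(B)$, so that the restriction to $\mathcal{D}(B)$ is harmless; and checking measurability of a selection from the argmin. For the latter I would break ties lexicographically, using the finite set of contiguous partitions produced by the dynamic programming algorithm.
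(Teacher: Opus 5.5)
Your proof is correct, and it takes a genuinely different route from the paper's.

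The paper compares $d_{\textrm{plug-in}}$ pathwise with the assumed exact minimax rule $d^*$. On the event $\{\|\hat\phi-\phi\|_\infty\le\eta\}$ it applies a sup-norm Lipschitz bound on the loss (constant $4B$) twice, once on each side of the plug-in's optimality for $\hat\phi$. It bounds the loss by $4B^2$ on the complement, and makes the complement uniformly unlikely via subgaussian tails and a union bound over $\mathcal{X}$. Existence of $d^*$ enters exactly at the step $R(d^*,\theta)\le V(I,\Theta)$.

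You instead sandwich both quantities around the oracle benchmark $J(\phi)$. You use that every rule in $\mathcal{D}(B)$ has risk at least $J(\phi)$, and that the loss is a squared Hilbert norm, so the triangle inequality in $L_2(p)$ controls the plug-in.

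Your route buys several things:
\begin{enumerate}
\item It needs no existence assumption.
\item It needs only $\mathbb{E}_P[u_x^2]\le 1$ rather than subgaussianity.
\item It gives an explicit remainder $4Bs/\sqrt{I}+4s^2/I$, free of $|\mathcal{X}|$, instead of a non-quantitative $I(\varepsilon)$.
\item It never needs $\hat\phi$ to lie in $[-B,B]$. The paper invokes Condition 3 with $\phi'=\hat\phi$, which is not in $\Theta$ in general, so its Lipschitz constant should really be $4B+\eta$ on the good event.
\end{enumerate}
Your Step 2 also bounds the plug-in's worst-case regret by $4Bs/\sqrt{I}+4s^2/I$. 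For large $I$ this improves Theorem \ref{thm:minimax-rate} by removing the $\sqrt{\log(2|\mathcal{X}|)}$ factor.

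What the paper's argument buys in exchange is generality. It uses only uniform consistency of $\hat\phi$ and boundedness and Lipschitz continuity of the loss. It would therefore extend to non-quadratic losses, where your Hilbert-space triangle-inequality step is unavailable.
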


\begin{proof}
See Appendix \ref{app:proof-epsilon-minimax-theorem}.
\end{proof}

\subsection{Minimax regret in the archetype discovery problem}
We finalize the decision-theoretic analysis of the archetype discovery problem by analyzing the worst-case regret criterion. We define the regret loss of an action $\bar{\phi} \in \bar{\Phi}_{K}(B)$ to be: 
\begin{equation}\label{eq:regret_loss} 
\mathcal{L}(\bar{\phi};\phi,p) \equiv L(\bar\phi;\phi,p) - \inf_{\bar\phi \in \bar\Phi_K(B)} L(\bar\phi;\phi,p). 
\end{equation}
As usual, the regret loss above captures the excess loss of an action relative to the oracle solution of the archetype discovery problem.   
\begin{theorem}[Minimax Regret Rate]
\label{thm:minimax-rate}
Assume the statistical model in \eqref{eq:estimator} with parameter space $\Theta(B)$ given by \eqref{eq: parameter space}. Let $\bar\sigma := \sup_{x \in X}\sigma_x < \infty$. For all sufficiently large $I$,
\begin{equation} \label{eqn:minimax_regret_rate_thm_equation} 
\adjustlimits \inf_{d \in D(B)}
\sup_{\theta \in \Theta(B)}
\mathbb{E}_{\widehat{\phi} \sim (\phi,P)} \left[ \mathcal{L}(d(\widehat{\phi}); \phi,p) \right]
\;\le\;
8B \bar\sigma\, \sqrt{\frac{2 \log(2|\mathcal{X}|)}{I}}. 
\end{equation}
\end{theorem}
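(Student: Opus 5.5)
The bound will follow from evaluating the plug-in rule $d_{\textrm{plug-in}}$ of \eqref{eq:plugin-rule-minimax}: the infimum over $d \in D(B)$ is at most its worst-case expected regret. Fix $\theta=(\phi,P)\in\Theta(B)$. Let $\bar\phi^\dagger \in \arg\min_{\bar\phi\in\bar\Phi_K(B)} L(\bar\phi;\phi,p)$ be an oracle solution, and let $\hat d = d_{\textrm{plug-in}}(\widehat\phi)$. The goal is a deterministic bound on $\mathcal{L}(\hat d;\phi,p)$ in terms of $\max_x|\widehat\phi(x)-\phi(x)|$, followed by a maximal inequality for subgaussian variables.

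\textbf{Step 1 (deterministic comparison).} Write $L(\bar\phi;\phi,p)-L(\bar\phi;\widehat\phi,p)=\sum_x p(x)\big(\phi(x)-\widehat\phi(x)\big)\big(\phi(x)+\widehat\phi(x)-2\bar\phi(x)\big)$. Decompose the regret as
\[
\mathcal{L}(\hat d;\phi,p)=\big[L(\hat d;\phi)-L(\hat d;\widehat\phi)\big]+\big[L(\hat d;\widehat\phi)-L(\bar\phi^\dagger;\widehat\phi)\big]+\big[L(\bar\phi^\dagger;\widehat\phi)-L(\bar\phi^\dagger;\phi)\big].
\]
The middle bracket is $\le 0$ by the definition of the plug-in rule, since $\bar\phi^\dagger\in\bar\Phi_K(B)$. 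Subtracting the two outer brackets, the terms in $\phi+\widehat\phi$ cancel, leaving
\[
\mathcal{L}(\hat d;\phi,p)\le 2\sum_x p(x)\big(\widehat\phi(x)-\phi(x)\big)\big(\hat d(x)-\bar\phi^\dagger(x)\big)\le 2\cdot 2B\cdot \max_x|\widehat\phi(x)-\phi(x)|,
\]
because $\hat d$ and $\bar\phi^\dagger$ both take values in $[-B,B]$ and $\sum_x p(x)=1$. The key point is this cancellation: a cruder bound would leave terms involving $|\widehat\phi|$, which is unbounded.

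\textbf{Step 2 (maximal inequality).} Since $\widehat\phi(x)-\phi(x)=\sigma_x u_x/\sqrt I$, we have $\max_x|\widehat\phi(x)-\phi(x)|\le (\bar\sigma/\sqrt I)\max_x|u_x|$. Each of the $2|\mathcal X|$ variables $\pm u_x$ is subgaussian with variance proxy at most $1$ under $P$, and no independence is needed. The standard Jensen and log-sum-exp argument, optimized at $t=\sqrt{2\log(2|\mathcal X|)}$, gives $\mathbb E_P\max_x|u_x|\le\sqrt{2\log(2|\mathcal X|)}$. Combining the two steps yields $\sup_\theta \mathbb E[\mathcal L]\le 4B\bar\sigma\sqrt{2\log(2|\mathcal X|)/I}$, which is within the stated constant $8B$. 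The slack leaves room for a cruder version of Step 1, for example bounding each outer bracket separately using $|\hat d-\bar\phi^\dagger|\le 2B$ together with a truncation.

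\textbf{Main obstacle: attainment of the minimizers.} One must check that $\bar\phi^\dagger$ exists and that $d_{\textrm{plug-in}}$ is well defined and measurable. Attainment holds because $\bar\Phi_K(B)$ is a finite union of compact sets: for each partition of $\mathcal X$ into at most $K$ cells, the cell values range over $[-B,B]^K$, and $L$ is continuous in those values. A measurable selection can be made by fixing a deterministic tie-breaking rule among the finitely many partitions.

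The phrase ``for all sufficiently large $I$'' is then harmless, since the bound above holds for every $I$. If the oracle infimum were not attained, I would use an $\eta$-minimizer in place of $\bar\phi^\dagger$ and let $\eta\to 0$.
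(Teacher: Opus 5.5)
Your proof is correct. Its skeleton matches the paper's: evaluate $d_{\textrm{plug-in}}$, split the regret into three brackets, and drop the middle one by optimality of the plug-in rule. You then reduce to $\mathbb{E}\|\widehat\phi-\phi\|_\infty$ and finish with the subgaussian maximal inequality. You derive that inequality directly, whereas the paper cites Theorem 2.5 of Boucheron et al.; both give the same constant.

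The genuine difference is Step 1. The paper bounds the two outer brackets separately by applying its Lipschitz Condition 3 (constant $4B$) twice, which is where its factor $8B$ comes from. But Condition 3 was verified only for functions bounded by $B$, and $\widehat\phi$ is not bounded (for instance, with Gaussian $u_x$). Write $\Delta=\widehat\phi-\phi$. The first bracket equals $-2\sum_x p(x)\Delta(x)\big(\phi(x)-\hat d(x)\big)-\sum_x p(x)\Delta(x)^2$, which is harmless. The third bracket equals $2\sum_x p(x)\Delta(x)\big(\phi(x)-\bar\phi^\dagger(x)\big)+\sum_x p(x)\Delta(x)^2$, so on its own it is not bounded by $4B\|\Delta\|_\infty$.

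Your combined bracket makes the quadratic terms cancel exactly. That cancellation is what actually justifies the paper's inequality. It also buys a sharper constant, $4B$ instead of $8B$, and a bound valid for every $I$ rather than only for large $I$. You additionally check attainment of the oracle over $\bar\Phi_K(B)$ and measurability of the plug-in selection, which the paper takes for granted.

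One small point: your closing aside about a ``cruder'' separate-bracket bound using $|\hat d-\bar\phi^\dagger|\le 2B$ is muddled. Bounding the brackets separately is precisely where the uncancelled $\sum_x p(x)\Delta(x)^2$ appears. Nothing in your argument depends on that aside, though.
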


\begin{proof}
See Appendix \ref{app:proof-rate-epsilon-minimax-theorem}.
\end{proof}
Theorem \ref{thm:minimax-rate} shows that---even if the cardinality of $\mathcal{X}$ grows as a function of $I$---the worst-case regret of the optimal decision rule in the archetype discovery problem vanishes to zero as $I$ grows large (provided $|\mathcal{X}|$ does not grow too quickly and $\bar{\sigma}$ remains bounded). Thus, our theorem provides an upper bound on the \emph{minimax regret rate} associated to the archetype discovery problem. 

Theorem \ref{thm:minimax-rate} is established by showing that the worst-case regret of the \emph{plug-in} rule is bounded by the right-hand side of \eqref{eqn:minimax_regret_rate_thm_equation}. That is,
\[\sup_{\theta \in \Theta(B)}
\mathbb{E}_{\widehat{\phi} \sim (\phi,P)} \left[ \mathcal{L}\left( d_{\textrm{plug-in}}( \widehat{\phi});\phi,p \right) \right] \leq 8B \bar\sigma\, \sqrt{\frac{2 \log(2|\mathcal{X}|)}{I}}. \]
Incidentally, this means that the worst-case regret of the plug-in rule will tend to be small, provided $\mathcal{X}$ is not large relative to $I$. 

A natural question to ask is whether there exists a matching lower bound for the worst-case regret. We think that it might be possible to give a positive answer to this question, at least in the case in which the dimension of $\mathcal{X}$ is fixed. In particular, we think that a lower bound for minimax regret of order $I^{-1/2}$ could be derived using a similar argument to the one used in the proof of Theorem 1 in  \cite{bartlett1998}, which provides a minimax lower bound for the regret of empirically designed vector quantizers.

\section{Illustrative Example} 
\label{section:examples}
In this section we present a simple example to illustrate how the solutions of the archetype discovery problem can be obtained by clustering the values of the function $\phi$.\footnote{Appendix \ref{subsec:dp-illustration-Example2} presents an additional illustrative example using data from the Atlantic Causal Inference Conference (ACIC) 2016.} Our goal is twofold. First, we want to provide the reader a concrete sense of the computational cost associated to the use of the dynamic programming algorithm for clustering the $N$ different values of $\phi$. Second, we would like to suggest different ways of visualizing the archetype set and the values of the function $\bar{\phi}$. A distinction that is useful to keep in mind while reading this section is that there is usually a difference between the cardinality of the \emph{domain} of $\phi$ (denoted $|\mathcal{X}|$) and the cardinality of the \emph{image} of $\phi$ (that is, the number of different values that $\phi$ takes, denoted by $N$). With this distinction in mind, we present tables and figures with information on both \(|\mathcal{X}|\) and \(N\). We assume throughout that $p(x)=1/|\mathcal{X}|$ for every $x \in \mathcal{X}$. Thus, the optimized oracle loss is
\[
L_{oracle}(c) \equiv \sum_{i=1}^{N} p_i \bigl(\phi_i-\mu_{c(i)}\bigr)^2 \qquad \text{where}
\qquad
p_i=\frac{|\{x \in \mathcal{X}\: | \: \phi(x)=\phi_i\}|}{|\mathcal{X}|}.
\]

\subsection{Example 1}
\label{subsec:dp-illustration-Example1}

Consider the function
\[
\phi(x_1,x_2)=\exp\!\left(-(x_1^2+x_2^2)\right),
\qquad
(x_1,x_2)\in[-1,1]^2,
\]
evaluated on an equally spaced \(300\times 300\) grid over \([-1,1]^2\). Hence \(|\mathcal X|=90{,}000\), \(N=7{,}401\), and \(K=10\). 

We find the oracle solution to the archetype discovery problem using \cite{bruce1965optimum}'s dynamic programming algorithm described in Section \ref{subsec:dynamic_prog}. The end-to-end computational complexity of this algorithm is known to be
\[
O\!\bigl(|\mathcal X|\log |\mathcal X| + K N^2\bigr).
\]
The first component comes from sorting the values of $\phi(\mathcal{X})$, which requires computations of order \(O(|\mathcal X|\log |\mathcal X|)\). The second component is the cost of the dynamic-programming routine, which is known to be of order \(O(KN^2)\). The exact dynamic program has runtime of \(0.3688\) seconds, and the value of the oracle solution is \(L_{oracle}=0.000575\).

Panel a) of Figure \ref{fig:dp-sim-top-sorted} below reports the archetype sets (each in different color) obtained by the dynamic programming algorithm. As discussed in Section \ref{section:basic_framework}, the archetype sets inherit the structure of the level sets of the original function \(\phi\) (which, in this example, are concentric rings). Panel a) presents the archetype sets. Panel b) reports the sorted values of the image of $\phi$, where each different color represents a different cluster. The horizontal lines depict the within cluster means, which correspond to the values of the function $\bar{\phi}$ for each $x$ that leads to values of $\phi$ in that cluster. 

\begin{figure}[!h]
    \centering
    \includegraphics[width=0.96\textwidth]{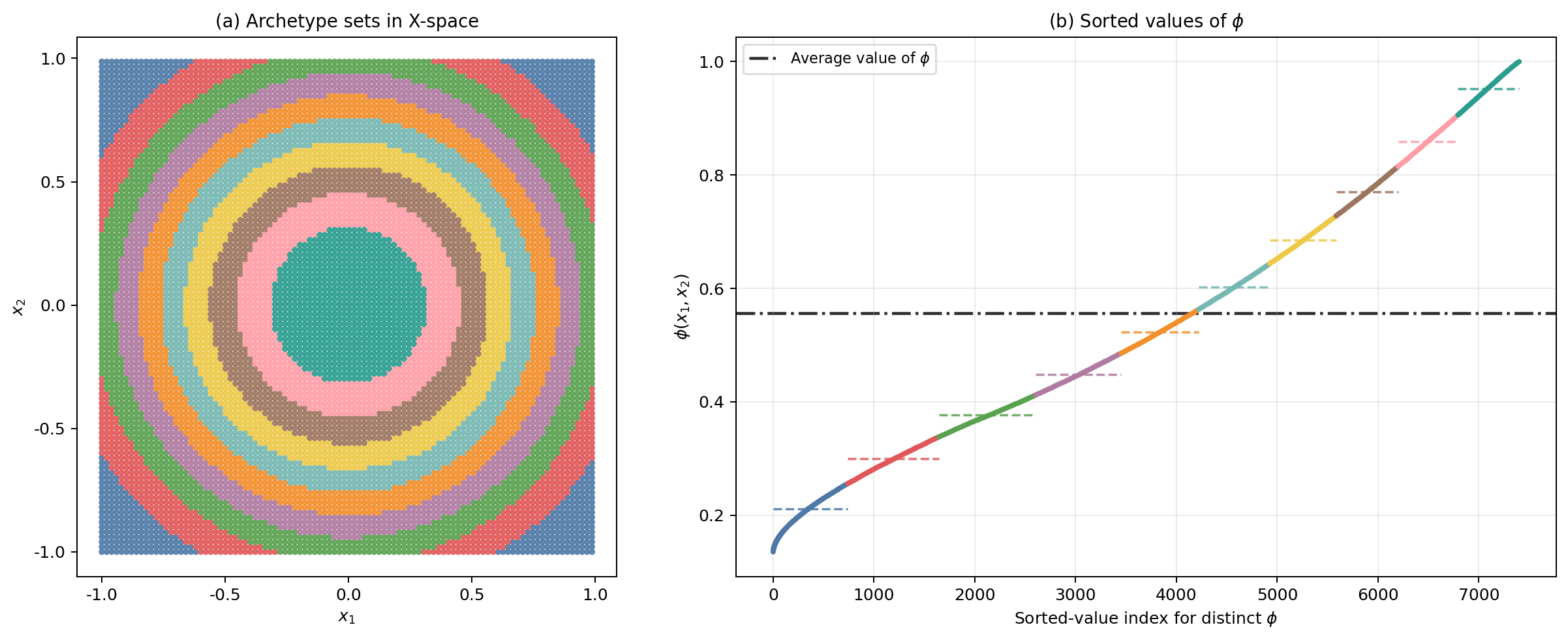}
    \caption{Panel a): Archetype sets of the function $\phi(x_1,x_2)=\exp(-(x_1^2+x_2^2))$. Panel b): sorted and clustered values of $\phi$. Number of archetypes is $K=10$. }
    \label{fig:dp-sim-top-sorted}
\end{figure}

Figure \ref{fig:dp-sim-cluster-mass} complements Figure \ref{fig:dp-sim-top-sorted} by showing the within-cluster variance and the share of covariate values contained in each cluster. The color coding for each of the bars is the same as the one used to depict the different clusters in Figure \ref{fig:dp-sim-top-sorted}. Note that the clusters are indexed according to the value of their within-cluster mean (the first cluster, containing 6.8\% of the observations, has the lowest within-cluster mean).

\begin{figure}[!htbp]
    \centering
    \includegraphics[width=0.72\textwidth]{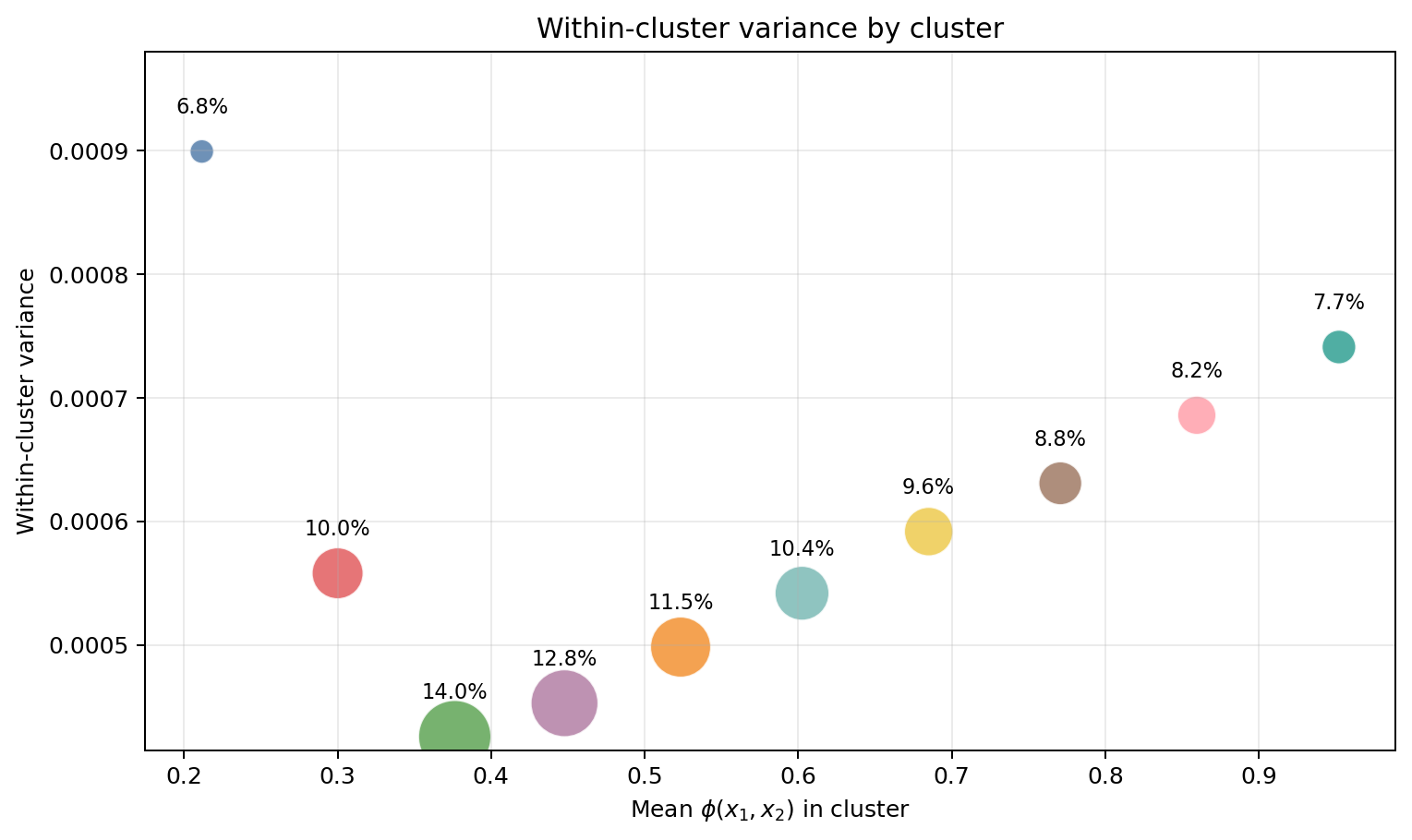}
    \caption{Within-cluster variance for the $K=10$ partition in Example 1. Bubble area is proportional to the cluster's proportion of grid points, and the labels $1$--$10$ denote the displayed cluster indices.} 
\label{fig:dp-sim-cluster-mass}
\end{figure}

\section{Extensions} 
\label{section:extension}
\subsection{Abstention at the cluster level} \label{subsection:abstention_main}
Consider now the case in which---in addition to communicating the function $\bar{\phi}$ to the policymaker---the researcher can also provide a set of covariate values at which, instead of making a prediction, it is better to \emph{admit ignorance} or \emph{abstain}. We follow \cite{breza2025generalizability} and model abstentions as a function $\pi: \mathcal{X} \rightarrow \{0,1\}$, where $\pi(x)=0$ is interpreted as a suggestion (from the researcher to the policymaker) to avoid using the function $\bar{\phi}$ for making predictions at $x$. We adopt the same loss function as in \cite{breza2025generalizability}:     
\begin{equation}\label{eq:loss_abstention}
L(\bar\phi, \pi; \phi,p) \;\equiv\; \sum_{x\in\mathcal{X}} p(x) \Big[ \pi(x) \big(\phi(x) - \bar\phi(x)\big)^2 + (1-\pi(x))\, \sigma^2 \Big],
\end{equation}
where $\sigma^2$ denotes the cost of abstention. 

In order to solve the oracle problem we impose two mild restrictions on the functions $\bar{\phi}$ and $\pi$. First, we assume that the researcher is only allowed to report a summary $\bar{\phi}$ that belongs to the set $\bar{\Phi}^*_K(A_K,G^{\phi})$ defined in Equation  \eqref{eq:measurable_w.r.t_G}.\footnote{That is, the function $\bar{\phi}$ is required to be measurable by the smallest $\sigma$-algebra that makes $\phi$ measurable.} As discussed in Remark \ref{rem:measurability}, this restriction does not entail any loss of generality when abstention is not allowed and we impose it to have a more direct comparison with Theorem \ref{thm:archetype discovery solution}.  

Second, we make a restriction on the type of abstention that the researcher can recommend to the policy maker. Define 
\[ \Pi_K(\bar\phi) \equiv \{ \pi: \mathcal{X} \to \{0,1\} \: | \:   \bar\phi(x)=\bar\phi(x') \implies \pi(x)=\pi(x')\}.\]
Note that an abstention function $\pi \in \Pi_K(\bar\phi)$ \emph{respects}  the archetype sets defined by $\bar{\phi}:$ if the researcher recommends abstention for one covariate value in the $k$-th archetype set, then the researcher must necessarily recommend abstention for all the covariate values in such set. We think this is a reasonable restriction (consistent with the idea that the policymaker has some limited ability to parse complex functions). 

Define the oracle solution to the archetype discovery problem with $\bar{\phi}$-abstention as 
\begin{equation} \label{eq:archetype discovery problem abstention}
\inf_{\bar{\phi} \in \bar{\Phi}^*_K(A_K,G^{\phi}),\pi \in \Pi_K(\bar\phi)} L(\bar\phi, \pi; \phi,p).
\end{equation}

\begin{theorem}\label{thm:thm_oracle_abstention_simple}  
Let $i:\mathcal{X} \rightarrow \{1,\ldots, N\}$ be the function such that $\phi(x) = \phi_{i(x)}$. If $c^*$ solves the clustering problem 
\begin{equation} \label{eq:oracle_abstention} 
\min_{c:\{1,\ldots, N\} \rightarrow \{1,\ldots, K\}} \sum_{k=1}^{K} \min \left\{ \sum_{\{i \: | \: c(i)=k \}}  p_i \Big[   \big(\phi_i - \mu_k(c) \big)^2 - \sigma^2  \Big],0 \right\},    
\end{equation}
then the function 
\[ \bar{\phi}^*(x) = \mu_{c^*(i(x))}(c^*)   \]
solves \eqref{eq:archetype discovery problem abstention}. The oracle abstention function is
\[ \pi^*(x) \equiv \mathbf{1} \left \{ \sum_{\{i \: | \: c(i)=c^*(i(x)) \}}  p_i \big(\phi_i - \mu_{c^*(i(x))}(c^*) \big)^2     \leq \sigma^2 \sum_{\{i \: | \: c(i)=c^*(i(x)) \}}  p_i \right \}. \]
\end{theorem}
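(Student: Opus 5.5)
The plan is to reduce \eqref{eq:archetype discovery problem abstention} to a problem over clusterings of the $N$ distinct values, and then solve for the reported values and the abstention choices cluster by cluster.

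\emph{Step 1: rewrite the loss in terms of a clustering.} Any $\bar\phi\in\bar\Phi^*_K(A_K,G^{\phi})$ has the form $\bar\phi(x)=a(i(x))$, where $a$ takes $K$ distinct values $v_1,\ldots,v_K$. Relabeling these values defines a surjective clustering $c$ with $a(i)=v_{c(i)}$. Because $\pi\in\Pi_K(\bar\phi)$ is constant on each archetype set, it is determined by a vector $(\pi_1,\ldots,\pi_K)\in\{0,1\}^K$. Grouping the terms in \eqref{eq:loss_abstention} by level curves, and using $p_i=\sum_{x\in G^\phi_i}p(x)$, gives
\[
L(\bar\phi,\pi;\phi,p)=\sigma^2+\sum_{k=1}^K \pi_k\sum_{\{i\,:\,c(i)=k\}} p_i\Big[(\phi_i-v_k)^2-\sigma^2\Big].
\]
Here I use $\sum_i p_i=1$ to pull out the constant $\sigma^2$.

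\emph{Step 2: optimize cluster by cluster.} Fix $c$. Minimizing over $v_k$ and $\pi_k$ separates across $k$. For each $v_k$, minimizing over $\pi_k$ gives $\min\{\sum_{c(i)=k} p_i[(\phi_i-v_k)^2-\sigma^2],0\}$. This expression is nondecreasing in the within-cluster sum of squares, so it is minimized at $v_k=\mu_k(c)$; this is the usual weighted-mean argument from the proof of Theorem~\ref{thm:archetype discovery solution}. The optimized loss is therefore $\sigma^2$ plus the objective in \eqref{eq:oracle_abstention}. The optimal $\pi_k$ equals $1$ exactly when $\sum_{c(i)=k}p_i(\phi_i-\mu_k)^2\le\sigma^2\sum_{c(i)=k}p_i$, which is the displayed $\pi^*$ once we evaluate it at $c=c^*$. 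Ties are broken toward $\pi=1$.

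\emph{Step 3: optimize over clusterings.} Minimizing over $c$ then shows that $c^*$ yields the infimum, with $\bar\phi^*(x)=\mu_{c^*(i(x))}(c^*)$.

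The main obstacle is the requirement $|a(\{1,\ldots,N\})|=K$. The optimal means $\mu_k(c^*)$ must be distinct, or $\bar\phi^*$ would take fewer than $K$ values. The converse direction also needs care: a $\bar\phi$ with $K$ values always induces a valid surjective $c$.

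For distinctness, I would argue as follows. If two clusters had equal means, we could reassign them. In the no-abstention case this is excluded by the standard contiguity/strict-improvement argument: if $N\ge K$, splitting a non-singleton cluster strictly reduces the sum of squares, and merging two equal-mean clusters leaves the loss unchanged. With abstention, strict improvement may fail, because abstained clusters contribute zero. In that case I would treat the value set as ``at most $K$'' up to ties, or perturb the abstained clusters' means. The reported values on abstained clusters do not affect the loss, so they can be chosen distinct. Moving an abstained cluster's reported value does not affect loss, and non-abstained clusters with equal means can be handled by the splitting argument. I would state explicitly that the assertion is understood up to this relabeling.
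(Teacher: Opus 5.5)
Your proof is correct and is essentially the paper's argument. You represent $\bar\phi$ by a surjective clustering $c$ and values $v_k$, use $\pi\in\Pi_K(\bar\phi)$ to reduce to one $\pi_k$ per cluster, profile out $\pi_k$ to get the $\min\{\cdot,0\}$ terms, and take $v_k=\mu_k(c)$. Your monotonicity remark supplies the justification for that last step, which the paper only asserts by analogy with Theorem~\ref{thm:archetype discovery solution}.

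Your distinctness caveat is a genuine point the paper's proof skips. Take equal weights on $\{-10,-0.1,0,0.1,10\}$ with $K=2$ and $\sigma^2=1$: the unique optimal $c^*$ abstains on $\{-10,10\}$, and both cluster means equal $0$. Of your two remedies, only re-valuing the abstained clusters works. Tied values also merge archetype sets and force a common abstention decision, so reading the image as ``at most $K$'' values would still leave the stated $\pi^*$ outside $\Pi_K(\bar\phi^*)$.
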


\begin{proof}
See Appendix \ref{subsec:proof_oracle_abstention}
\end{proof}

Note that when $\sigma^2$ is large enough---for example, when $\sigma^2 \geq (\phi_N-\phi_1)^2$---a solution to the problem in \eqref{eq:oracle_abstention} can be obtained by solving the usual $K$-means clustering problem in \eqref{eq: k-means clustering}.\footnote{In fact, Popoviciu's inequality \citet{Popoviciu1935,BhatiaDavis2000} implies that a weaker sufficient condition is $\sigma^2 \geq (\phi_N-\phi_1)^2/4$.} However, it is relatively straightforward to provide examples in which solving \eqref{eq: k-means clustering} (and declaring abstention whenever the cluster variance is larger than $\sigma^2$) is not optimal. For instance, suppose that we want to cluster the values $\{0,2,3,5\}$ into $K=2$ clusters, and suppose further that the weights are uniform. Assume also that the cost of abstention is $\sigma^2=1$. Algebra shows that the clusters $\{0,5\}$ and $\{2,3\}$ (with abstention for the elements in the first cluster) evaluate to a lower value of the objective function in \eqref{eq:oracle_abstention} than the $K$-means clustering solution in \eqref{eq: k-means clustering}, which consists of the contiguous clusters $\{0,2\}$ and $\{3,5\}$. To see this, note that the objective function in \eqref{eq:oracle_abstention} for clusters $\{0,5\}$ and $\{2,3\}$ equals 
\begin{eqnarray*} 
&& 0 + \min\{ (1/4)(2-5/2)^2 + (1/4)(3-5/2)^2 - 1/2,0  \} = - 3/8.
\end{eqnarray*}
But the objective function in \eqref{eq:oracle_abstention} for clusters $\{0,2\}$ and $\{3,5\}$ equals 0. This means that declaring abstention for values $\{0,5\}$ and clustering together $\{2,3\}$, is better than using the $K$-means clusters $\{0,2\}$ and $\{3,5\}$ (both of which leave the researcher indifferent between using the reported cluster means or abstaining).

\noindent \emph{Algorithms for archetype discovery with abstention.} It is not clear to us that---without further restrictions---there exists an algorithm for solving the oracle archetype discovery problem with abstention in \eqref{eq:archetype discovery problem abstention} that runs in polynomial time in $(K,N)$. When the clusters are not required to be contiguous, the problem seems to be combinatorial (as one searches over all possible subsets of $\{\phi_1, \ldots, \phi_{N}\}$ that are good candidates for declaring abstention). In contrast, we can show that if we further require the clusters to be contiguous---namely, if we focus on clustering functions such that $\phi_n < \phi_{n'}< \phi_{n{''}}$ and $c(n)=c(n'')=k$ imply $c(n')=k$---then a minor modification of the flow payoff in the dynamic programming algorithm presented in Section \ref{subsec:dynamic_prog} produces a solution to the   problem with contiguous clusters.

Figure \ref{fig:abstention_topview} presents the results of the archetype discovery problem with abstention in the context of Example 1. Once again, we take the number of clusters to be $K=10$ and require these clusters to be contiguous. We set the abstention cost at $\sigma^2=.0010$. Figure \ref{fig:abstention_topview} shows that under this parameterization, the researcher admits ignorance for the covariates associated to the \emph{largest} values of the policy effects. This result is not ex ante obvious: according to Figure \ref{fig:dp-sim-cluster-mass}, the cluster with the \emph{smallest} policy effects has the largest within-cluster variance. One simple intuition for our findings has to do with the fact that, despite the large within-cluster variance, the left-most cluster in Figure \ref{fig:dp-sim-cluster-mass} has the smallest share of observations and therefore does not have a big effect on the loss function. 

\begin{figure}[!htbp]
    \centering
    \includegraphics[width=1\textwidth]{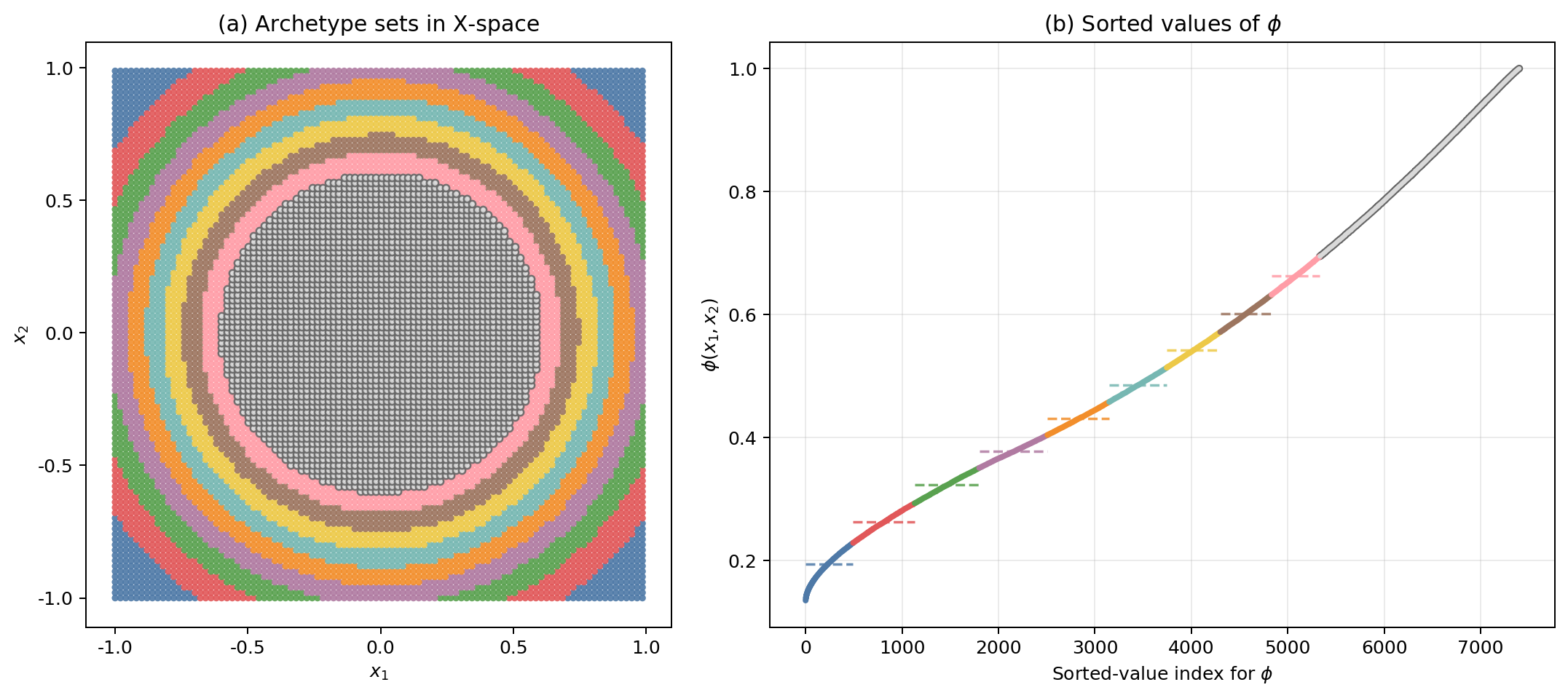}
    \caption{Top view of the oracle solution with abstention at $K=10$ and $\sigma^2=0.0010$. Panel (a) shows the archetype sets in $X$-space. The light-gray central region is the abstained cluster. Panel (b) shows the same solution on the sorted values of $\phi$ while the abstained group is displayed in gray.}
\label{fig:abstention_topview}
\end{figure}

Figure \ref{fig:kmeans_with_without_abstention} confirms this intuition and presents a more detailed comparison of the oracle solutions with and without abstention. The share of observations that belong to the right-most cluster (for which the researcher admits ignorance) is about 30\%. This cluster seems to contain the three right-most clusters that arise when abstention is not allowed. In the problem without abstention, the left-most cluster exhibited a high intra-cluster variance. This variance is considerably reduced in the problem with abstention.

\begin{figure}[!htbp]
    \centering
    \includegraphics[width=1\textwidth]{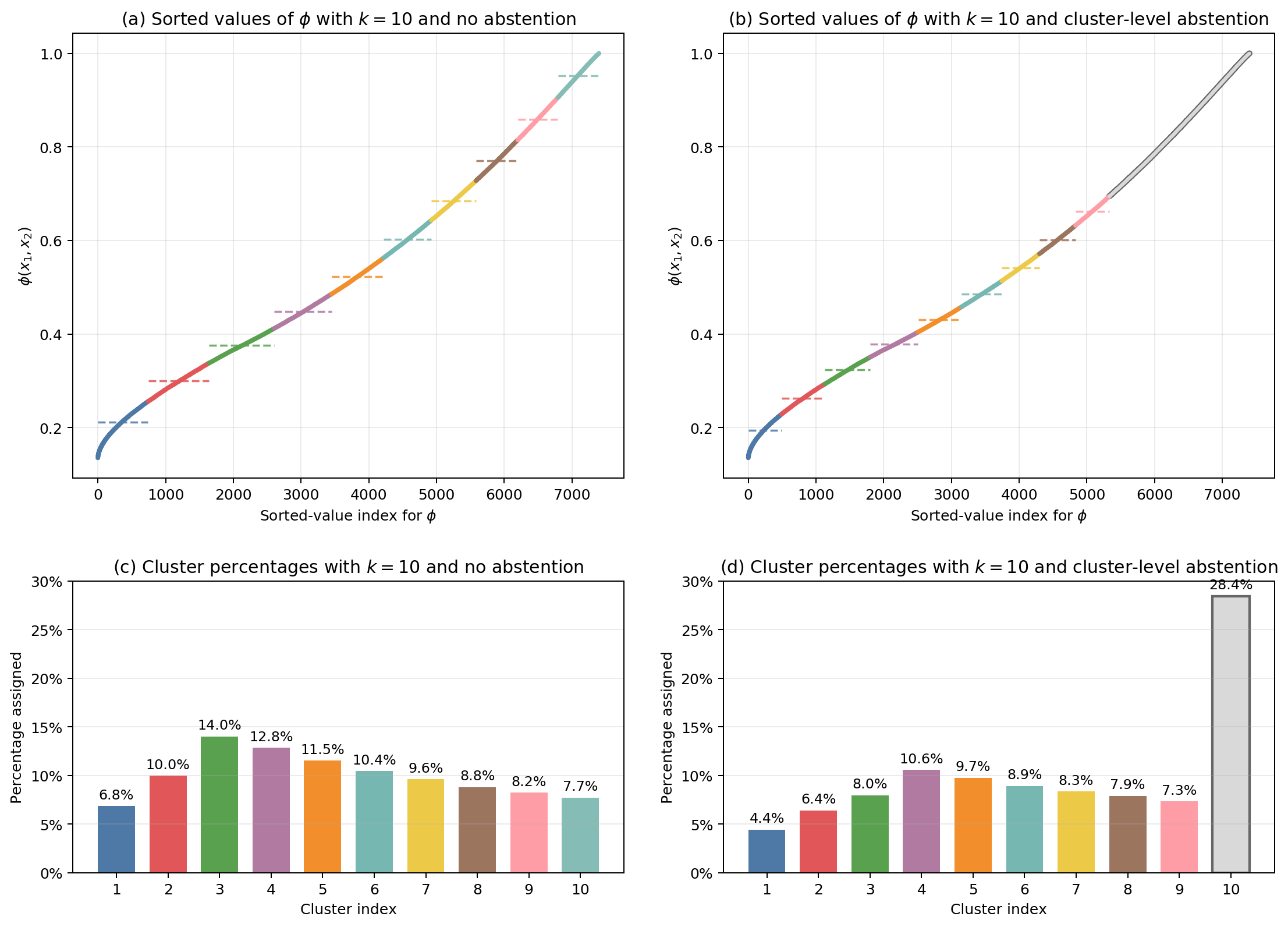}
    \caption{Comparison between the oracle solution ($K=10$) with and without abstention. The abstaining cost is chosen at $\sigma^2=0.0010$. Panel (a) and (b) compares the archetype discovery problem solution on top of sorted values of $\phi$ with and without abstention. Panel (c) and (d) show the corresponding percentages of the assigned cluster. In panel (b) and (d), the light-gray curve segment and the light-gray bar represent the abstained cluster.}
\label{fig:kmeans_with_without_abstention}
\end{figure}

Finally, Figure \ref{fig:abstention_sigma_grid} reports the effects of varying $\sigma^2$ over the archetype sets. As the parameter $\sigma^2$ increases (and admitting ignorance becomes more costly), the researcher admits ignorance for fewer and fewer values of $\phi$. Note that at $\sigma^2=.0002$ (the smallest value of $\sigma^2$ we consider in the upper graph of the figure), the researcher admits ignorance for both the smallest and the largest values of $\phi$. As we discussed before, these are the groups of observations with the largest within-cluster variance. 

\begin{figure}[!htbp]
    \centering
    \includegraphics[width=1\textwidth]{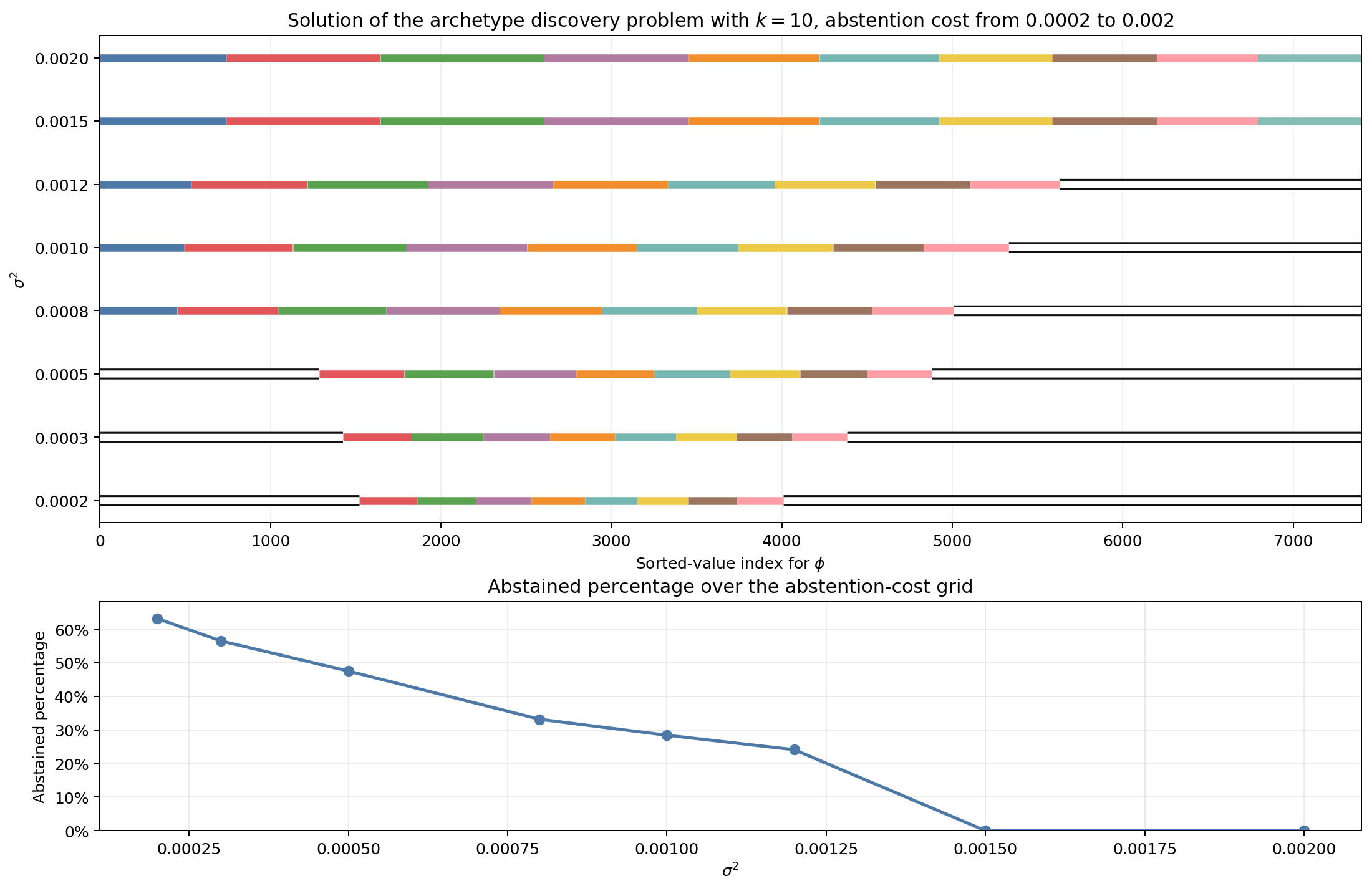}
    \caption{Solution of the archetype discovery problem with $K=10$ over the abstention-cost grid $\sigma^2 \in \{0.0002, 0.0003, 0.0005, 0.0008, 0.0010, 0.0012, 0.0015, 0.0020\}$. In the top panel, colored segments show the cluster assignment, while white segments with black outlines denote abstained clusters. The abstained sets are $\{1,10\}$ for $\sigma^2=0.0002,0.0003,0.0005$, $\{10\}$ for $\sigma^2=0.0008,0.0010,0.0012$, and none for $\sigma^2=0.0015,0.0020$. The bottom panel reports the percentage of points assigned to abstention at each value of $\sigma^2$.}
\label{fig:abstention_sigma_grid}
\end{figure}

\subsection{Other algorithms}
\label{subsec:benchmark-comparison}

We now return to Example 1 in Section ~\ref{subsec:dp-illustration-Example1} and compare the exact dynamic-programming solution with two popular algorithms. The first alternative algorithm to summarize $\phi$ is a (CART) tree as described in  \cite{breiman2017classification}. The tree partitions the space of covariates by greedy recursive axis-aligned splits and is restricted to \(K\) terminal leaves. Recursive decision trees implemented using CART-type recursive partitioning are now widely used to estimate heterogeneous causal treatment effects in experimental and observational studies (although it has been shown recently by \cite{cattaneo2025honest} that causal trees constructed via standard CART-type partitioning may exhibit poor convergence properties). In our example, we are assuming $\phi$ is known, and we simply want to compare the archetype sets generated by the trees to those generated by the oracle solution.  In our implementation, the computational complexity of implementing the tree is
\(
O\!\left(Kd|\mathcal X|\log|\mathcal X|\right),
\)
where \(d\) is the dimension of the covariate vector.

The second algorithm is Lloyd's algorithm \cite{lloyd1982least}. 
It targets the scalar clustering problem more directly, but it is a local-search heuristic and does not generally guarantee the global minimizer of~\eqref{eq: k-means clustering}.
Its computational complexity is
\(
O\!\left(TK|\mathcal X|\right),
\)
where \(T\) is the number of iterations until convergence. 

For the CART benchmark, each terminal leaf is assigned the \(p\)-weighted average of \(\phi\) over that leaf, and the resulting partition is evaluated using the same oracle loss \(L_{oracle}\) used in Section~\ref{section:examples}. 
Figure~\ref{fig:tree-sim-top-sorted} displays the resulting partition. 
The left panel shows the induced partition in covariate space, while the right panel maps the same assignment onto the sorted values of \(\phi\).

\begin{figure}[!htbp]
    \centering
    \includegraphics[width=0.96\textwidth]{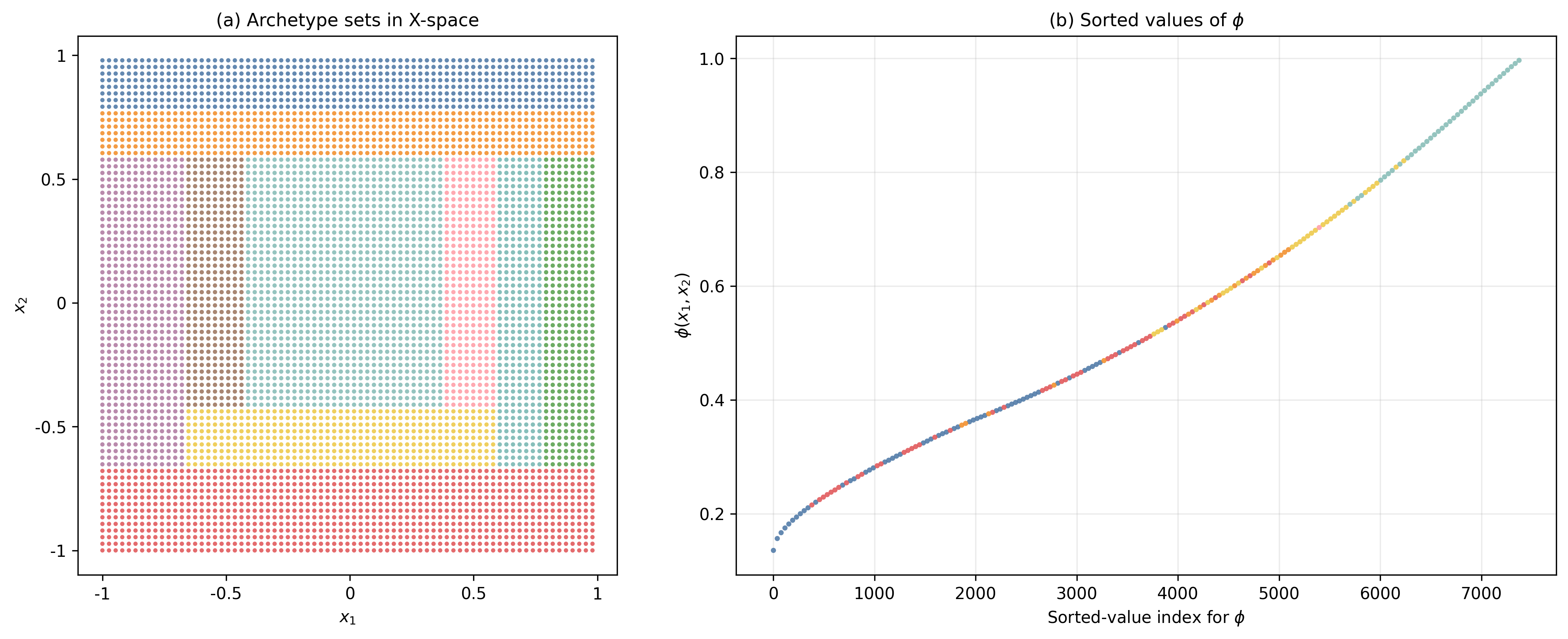}
    \caption{CART tree benchmark for Example~\ref{subsec:dp-illustration-Example1}. 
    The left panel displays the induced partition in covariate space. 
    The right panel displays the induced assignment on the sorted values of \(\phi\). 
    The tree has \(K=10\) terminal leaves.}
    \label{fig:tree-sim-top-sorted}
\end{figure}

The figure shows the geometric restriction imposed by the tree. 
The exact archetype sets in Example 1 in Section~\ref{subsec:dp-illustration-Example1} inherit the circular geometry of the level sets of \(\phi\). 
By contrast, the tree approximates these sets by recursive axis-aligned rectangles. 
When the same assignment is viewed on the sorted values of \(\phi\), it is fragmented rather than contiguous. 

We next consider Lloyd's algorithm. 
Unlike the tree, Lloyd's algorithm is applied directly to the scalar values of \(\phi\). 
It is therefore closer to the clustering problem characterized by Theorem~\ref{thm:archetype discovery solution}. 
Figure~\ref{fig:lloyd-sim-top-sorted} shows the resulting partition in covariate space and on the sorted values of \(\phi\).

\begin{figure}[!htbp]
    \centering
    \includegraphics[width=0.96\textwidth]{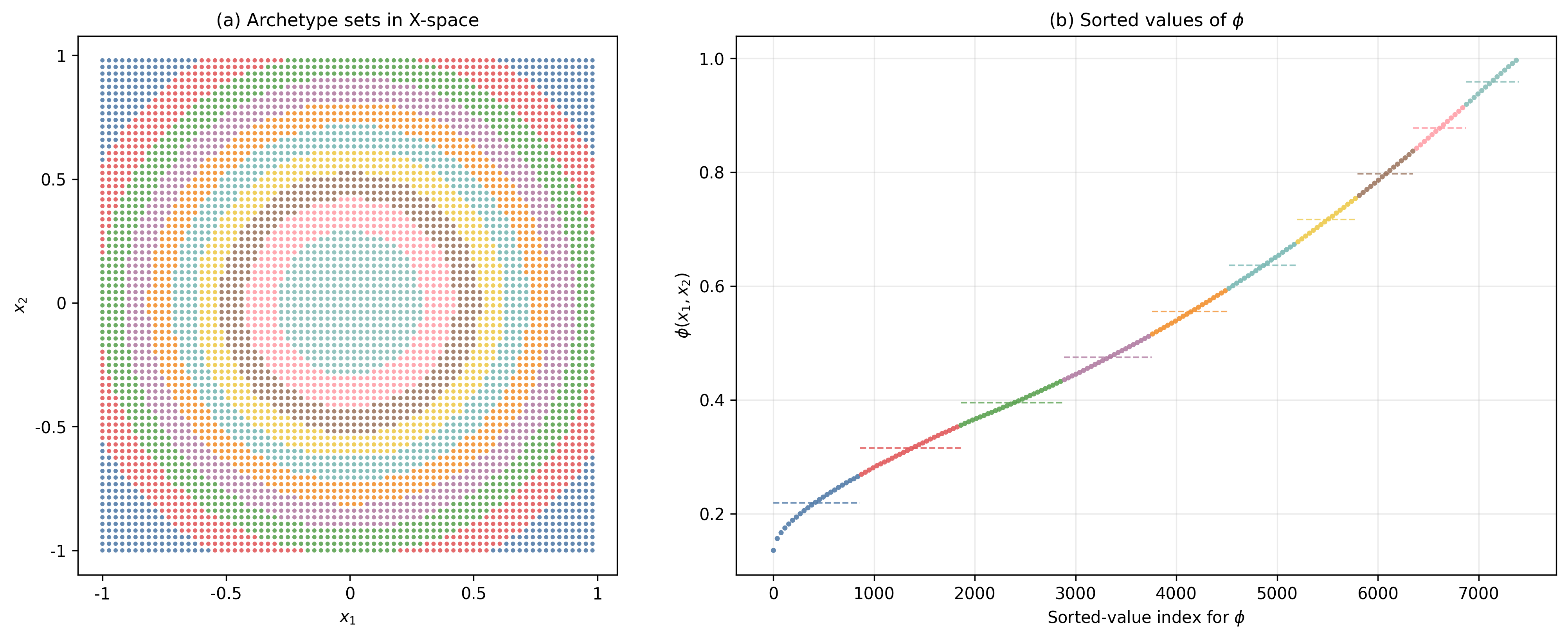}
    \caption{Lloyd's algorithm for Example~\ref{subsec:dp-illustration-Example1}. 
    The left panel displays the induced partition in covariate space. 
    The right panel displays the induced assignment on the sorted values of \(\phi\). 
    The algorithm is implemented with \(K=10\), a single random initialization, and a single run.}
    \label{fig:lloyd-sim-top-sorted}
\end{figure}

The clustering based on Lloyd's algorithm is much closer to the interval structure delivered by the exact dynamic program. 
This is expected: both procedures operate on the scalar values of \(\phi\), rather than imposing a recursive partition on \(x\)-space. 
The numerical closeness, however, should not be interpreted as an optimality guarantee. 
Lloyd's algorithm returns the fixed point reached from its initialization and update path, whereas the dynamic program computes the global minimizer of~\eqref{eq: k-means clustering}.

Table~\ref{tab:benchmark-algorithm-comparison} summarizes the comparison. 
The table reports two exact algorithms and two benchmark procedures. 
The first exact algorithm is Bruce's dynamic program, which is the baseline implementation used in Example~\ref{subsec:dp-illustration-Example1}. 
The second is \texttt{Ckmeans.1d.dp}, a faster exact implementation of one-dimensional \(K\)-means \cite{wang2011ckmeans}; for the weighted case considered here, we use the corresponding weighted extension \cite{song2020efficient}. 
Both exact algorithms solve the same one-dimensional clustering problem and therefore attain the same oracle loss. 
Their difference is computational rather than statistical or decision-theoretic.

\begin{table}[!htbp]
\centering
\caption{Algorithm comparison for Example~\ref{subsec:dp-illustration-Example1}.}
\label{tab:benchmark-algorithm-comparison}
\setlength{\tabcolsep}{6pt}
\renewcommand{\arraystretch}{1.15}
\begin{tabular}{lp{0.40\textwidth}cc}
\hline
Algorithm & Complexity & oracle loss & Runtimes \\
\hline
Exact DP (Bruce DP)
& \(O\!\left(|\mathcal X|\log|\mathcal X|+KN^2\right)\) 
& 0.000575 
& 0.3688 \\

Exact DP (\texttt{Ckmeans.1d.dp}) 
& \(O\!\left(|\mathcal X|\log|\mathcal X|+KN\right)\) 
& 0.000575 
& 0.0063 \\

CART tree 
& \(O\!\left(Kd|\mathcal X|\log|\mathcal X|\right)\), \(d=2\) 
& 0.009421 
& 0.0315 \\

Lloyd's algorithm 
& \(O\!\left(TK|\mathcal X|\right)\), \(T=49\) 
& 0.000593 
& 0.0743 \\
\hline
\end{tabular}

\medskip
\begin{minipage}{0.94\textwidth}
\footnotesize
\emph{Notes.}
All entries use the simulation design in Example~\ref{subsec:dp-illustration-Example1}, where \(|\mathcal X|=90{,}000\), \(N=7{,}401\), and \(K=10\). 
For each procedure, the reported value within a cluster is the \(p\)-weighted average of \(\phi\) over that cluster, and the displayed objective value is the corresponding value of \(L_{oracle}(c)\). 
Bruce DP refers to the optimum-quantization dynamic program of \cite{bruce1965optimum}. 
\texttt{Ckmeans.1d.dp} refers to the exact one-dimensional implementation of \cite{wang2011ckmeans}; the weighted extension is due to \cite{song2020efficient}. 
The CART benchmark is a greedy regression tree with \(K\) terminal leaves \cite{breiman2017classification}. 
Lloyd's algorithm is implemented with a single random initialization and a single run \cite{lloyd1982least}. 
\end{minipage}
\end{table}

The comparison highlights four points. 
First, exact one-dimensional clustering is computationally feasible at the scale of the examples considered in this paper. 
Second, faster exact implementations such as \texttt{Ckmeans.1d.dp} preserve the same oracle loss while reducing runtime substantially. 
Third, the CART tree has a much larger oracle loss in this example. 
Fourth, Lloyd's algorithm is much closer to the exact objective, but its performance remains that of a heuristic benchmark rather than an exact characterization of the oracle archetype report.

\section{Application} \label{sec:application}
In order to illustrate the usefulness of our theoretical results, we revisit the application discussed in \cite{Victor25}: an experiment with the government of Haryana in North India designed to analyze the effects of a policy bundle that provided different incentives for immunization across several villages. The goal of the intervention was to increase the takeup of immunization services. In particular, the outcome of interest discussed in \cite{chernozhukov2025reply} is the number of children (15 months or younger in a given month in a given village) that completed all the vaccines in the immunization schedule.  

In their data, there are 25 villages that were randomly assigned to the policy bundle, and 78 control villages. Almost all the treatment and control villages are followed for 12 months (the duration of the intervention).\footnote{See p. 1150 and 1151 of \cite{Victor25} for specific details on the treatment.} The total number of village-months observations is 843. Their replication package has 42 covariates available, 39 of which are baseline-village characteristics (such as religion, caste, financial status, baseline immunization, etc). The remaining variables include fixed effects, cluster indicators, and external weights for each village.\footnote{Based on our reading of \cite{chernozhukov2025reply} and their replication package: fixed effect index district-by-year-month cells, capturing the local time and district context of each observation; the cluster indicator identifies the village, since villages appear repeatedly across months; finally, external weight measures village size and is used as the population weight in the original weighted analysis.}   

The first two graphs below report the sorted values of the estimators---or \emph{machine learning (ML) proxies}---of the conditional average treatment effects using both an Elastic Net and a Neural Network. The sorted values reported in Figure \ref{fig:sorted_ML_proxies} are generated based on a random $67\%/33\%$-split of the 843 observations. The smaller sample in the split---henceforth, the \emph{main data}---has 281 observations. The ML proxies, $\widehat{\phi}$, are estimated on the dataset containing $843-281=562$ observations, with 78 villages in the control group and 25 villages in the treatment group. The function $\widehat{\phi}$ is then evaluated at the covariates that appear in the main data. This gives a total of $N=96$ different policy effects, ranging from $-13.803$ to $43.60$ for the Elastic Net estimator, and from $-13.495$ to $23.032$ for the Neural Network estimator. The solid horizontal red-line in Figure \ref{fig:sorted_ML_proxies} represents the median effect of the policy bundle: $2.814$ for the Elastic Net, and 2.441 for the Neural Net. These two numbers are taken from Table III in \cite{Victor25}, where the reported median is taken over 250 random data splits. The $K=5$ groups that we generate using the dynamic programming algorithm in Section \ref{subsec:dynamic_prog} are reported in different colors along with their corresponding summary values. For comparison, we also added black dashed lines to represent the quintiles of the policy effects. Computing the ML proxies and running the weighted $K$-means clustering algorithm takes about 1.965 seconds in a personal laptop. Computing the groups based on quantiles takes about 1.793 seconds. This means that the extra computational burden associated to constructing archetype sets using the weighted $K$-means algorithm is minimal.     

\begin{figure}[!h] 
    \centering
    \includegraphics[width=0.96\textwidth]{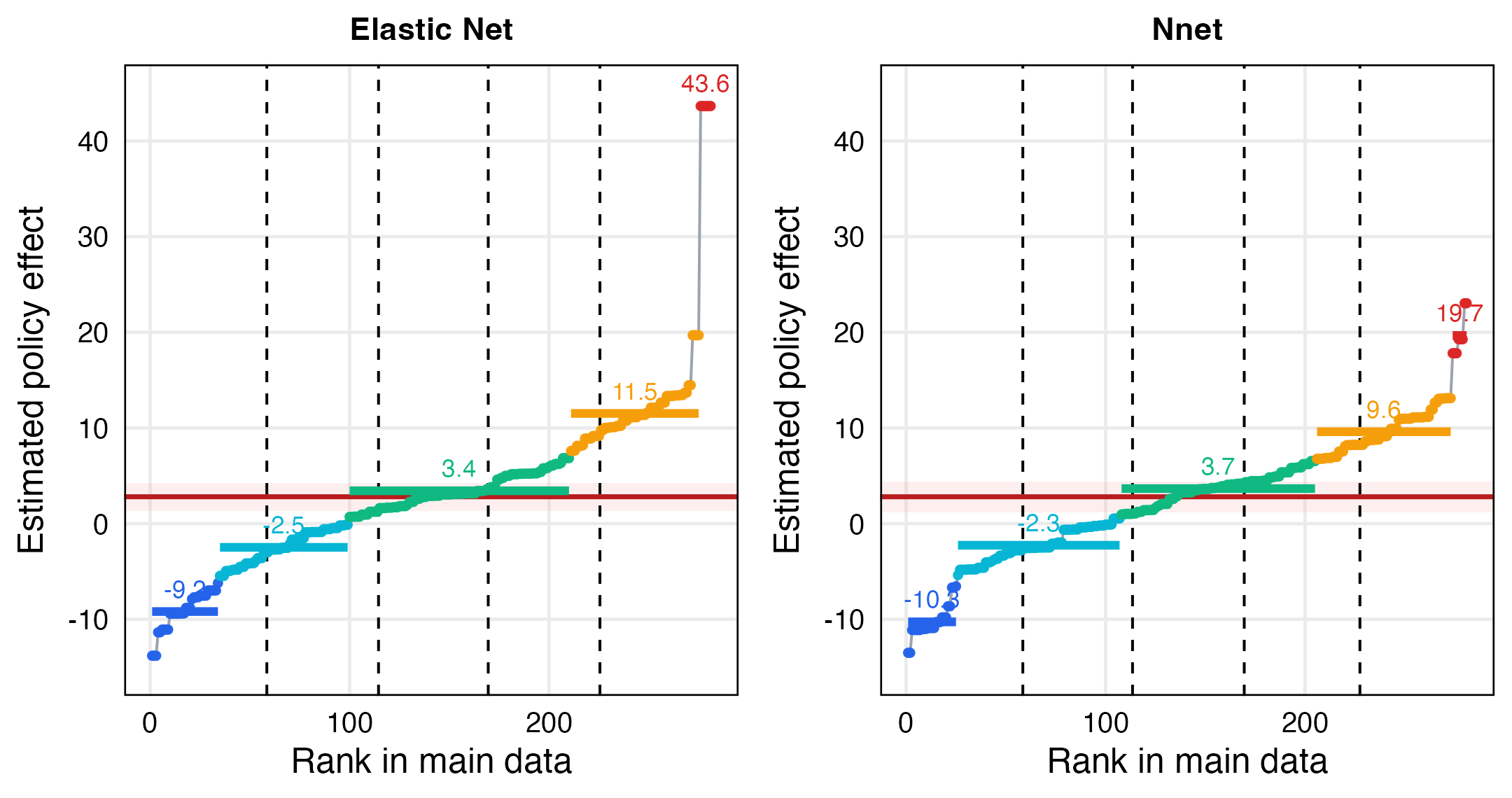} 
    \caption{Graphical representation of the weighted $K$-means clusters $(K=5)$ based on the sorted Elastic Net and Neural Net proxies. The $x$-axis is the index of the sorted ML proxy in the main data (281 total observations). The $y$-axis is the value of the ML proxy. Different colors are used to represent the different clusters. The colored horizontal lines are the summary values of each cluster. The vertical dashed lines represent the $K=5$ groups based on equally-spaced quantiles. The red horizontal line and the shaded red area are the values reported in Table III of \cite{Victor25}. } \label{fig:sorted_ML_proxies}
\end{figure}


Figure \ref{fig:sorted_ML_proxies} shows that the archetype sets generated by weighted $K$-means clustering are markedly different to the groups generated by equally-spaced quantiles of the ML proxies on the specific $67\%/33\%$ data split we consider.\footnote{The weights we use for each value of the proxy share main-sample observations that map to that value. If two village-level observations have the same estimated ML proxy but different village populations, they contribute equally to the weights.
} A natural question to ask is whether the difference in the size of the archetype sets will also be present in other random data splits. In order to answer this question, we consider 250 random $67\%/33\%$ data splits and, for each of them, we compute the share of observations that belong to each archetype set (the share is always relative to the smaller dataset that contains about $33\%$ of the observations). The orange dots in Figure \ref{fig:size_archetype_sets} report the median share of observations contained in each of the $K=5$ groups. 

\begin{figure}[!htbp]
    \centering
    \includegraphics[width=0.96\textwidth]{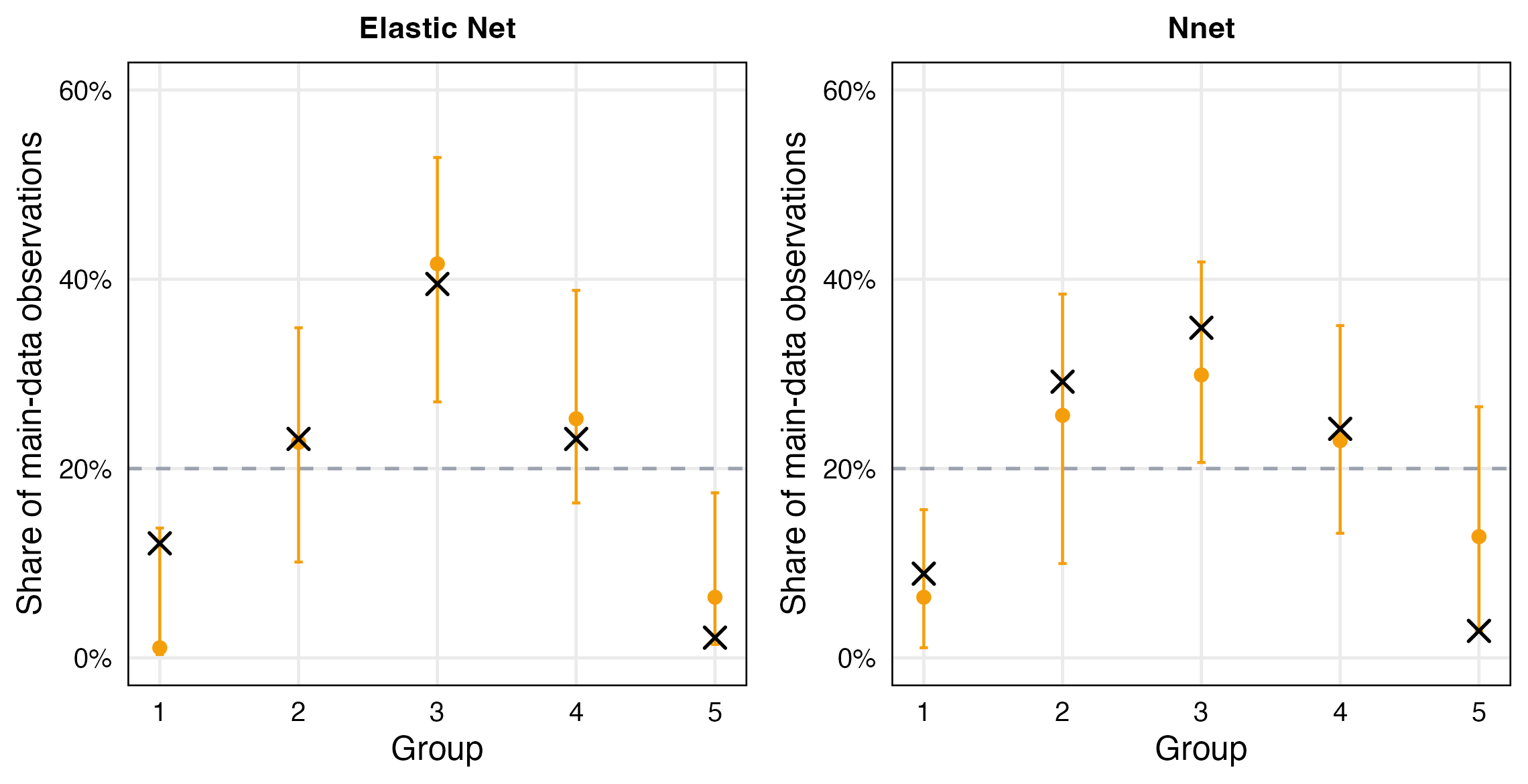}
    \caption{Graphical representation of the size of the weighted $K$-means clusters $(K=5)$ based on the sorted Elastic Net and Neural Net proxies. The $x$-axis is the group index. The $y$-axis is the share of observations in each group (relative to the main data). The groups with the larger indexes have larger values of the ML proxies. Vertical orange lines represent the range of shares observed in 95\% of the 250 random data splits. Black crosses represent the shares corresponding to data split used in Figure \ref{fig:sorted_ML_proxies}. The dashed horizontal line is the size of the groups obtained using quintiles. } \label{fig:size_archetype_sets}
\end{figure}

Figure \ref{fig:size_archetype_sets} confirms the pattern observed in Figure \ref{fig:sorted_ML_proxies}. The group in the center (Group 3) tends to be considerably larger than the size of the group based on the equally-spaced quantiles (which, by construction, contains $20\%$ of the observations). The groups with the smallest and largest values of the proxies (Groups 1 and 5) tend to be smaller when the groups are constructed using weighted $K$-means. The vertical orange lines in Figure \ref{fig:size_archetype_sets} represent the range of shares observed in 95\% of the 250 random data splits (where the data splits corresponding to the lowest 2.5\% values and the highest 97.5\% values have been excluded from consideration). The black crosses in the figure represent the share of observations corresponding to the data split used to construct Figure \ref{fig:sorted_ML_proxies}. 

Since the main object of interest in the archetype discovery problem is the summary of the heterogeneous effects for each archetype set, Figure \ref{fig:treatment_effects} reports the median value of $\bar{\phi}$---the summary of the ML proxies for each group. Once again, the median we report is based on the 250 random data splits that were used in Figure \ref{fig:size_archetype_sets}.   

\begin{figure}[!h]
    \centering
    \includegraphics[width=0.96\textwidth]{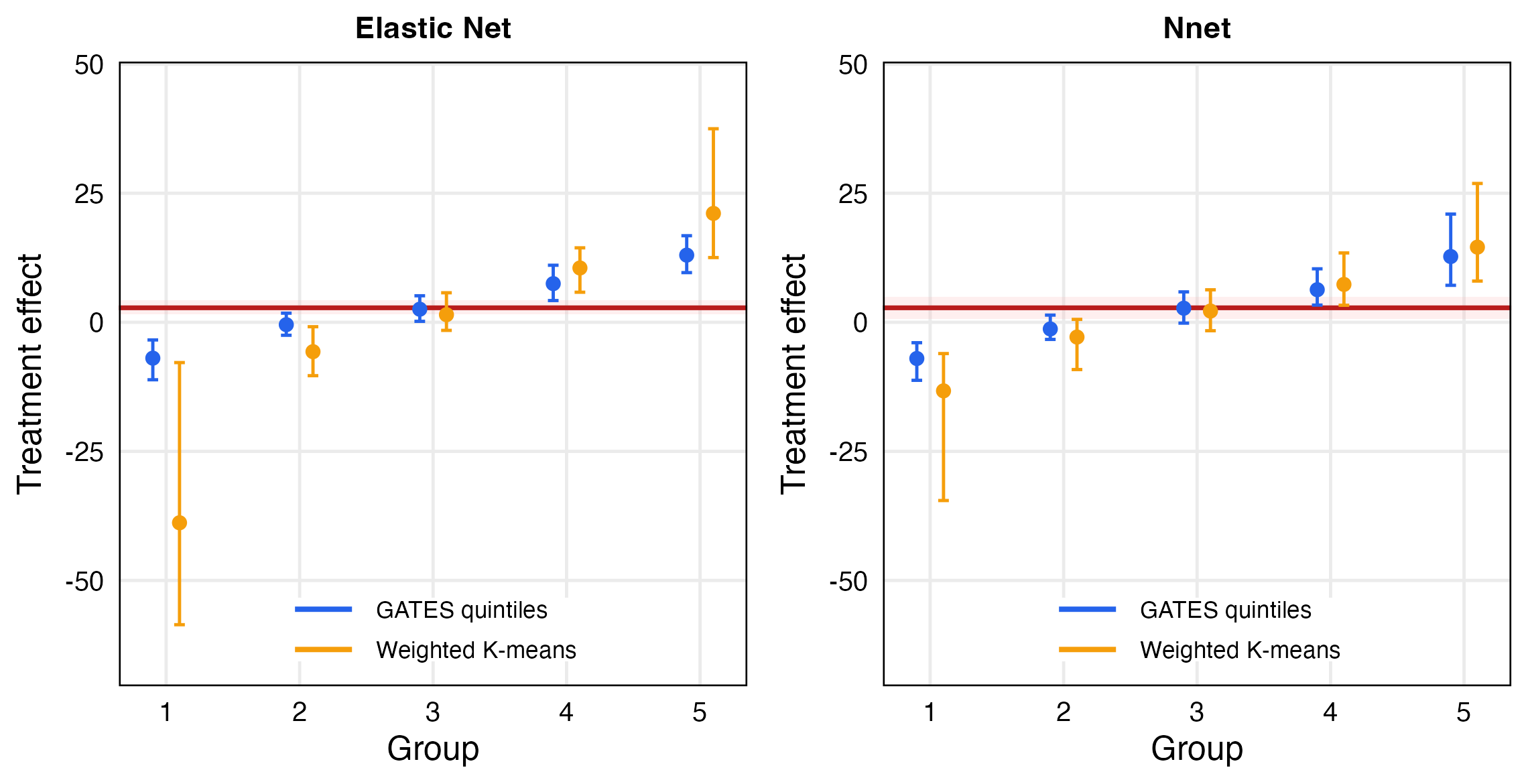}
    \caption{Graphical representation of the average value of the ML proxies for each group. The $x$-axis is the group index. The $y$-axis is the average value of the ML proxies, which we interpret as within-group estimated average treatment effects. The groups with the larger indexes have larger values of the ML proxies. The red horizontal line and the shaded red area are the values reported in Table III of \cite{Victor25}.} \label{fig:treatment_effects}
\end{figure}

The orange dots in Figure \ref{fig:treatment_effects} represent the median value of the researcher's report when the groups are constructed using weighted $K$-means clustering. The blue dots represent the median value of the report when the groups are constructed based on the quintiles of the ML proxy as recommended by \cite{Victor25}. The vertical lines represent the range of values of $\bar{\phi}$ observed in 95\% of the 250 random data splits (where the lowest 2.5\% values and the highest 97.5\% have been excluded from consideration). 

We note that the values of $\bar{\phi}$ based on weighted $K$-means tend to be more extreme---and noisier---for the groups with the smallest and largest values of the ML proxies. Thus, intuition suggests that if we allow the researcher to admit ignorance---as discussed in Section \ref{subsection:abstention_main}---the researcher will likely do so for the groups associated to the smallest and largest values of the ML proxies. Figure \ref{fig:abstention_application} confirms this intuition. In particular, the figure reports the archetype sets for different values of the abstention cost ($\sigma^2$) using the same $67\%/33\%$-data split that was used to construct Figure \ref{fig:sorted_ML_proxies}. The largest value of $\sigma^2$ in the $y$-axis corresponds to the smallest value of $\sigma^2$ for which the researcher \emph{never} admits ignorance. As discussed in Section \ref{subsection:abstention_main}, if we focus on the case where the abstention function \emph{respects the archetype sets}, then we can use a dynamic programming algorithm to solve the oracle archetype discovery problem with abstention.   Figure \ref{fig:abstention_application} shows that as we decrease the abstention cost, the researcher starts to admit ignorance. More importantly, the first groups affected are those corresponding to the smallest and largest values of the proxies. This pattern arises with both ML proxies. 

\begin{figure}[!h]
    \centering
    \includegraphics[width=0.9\textwidth]{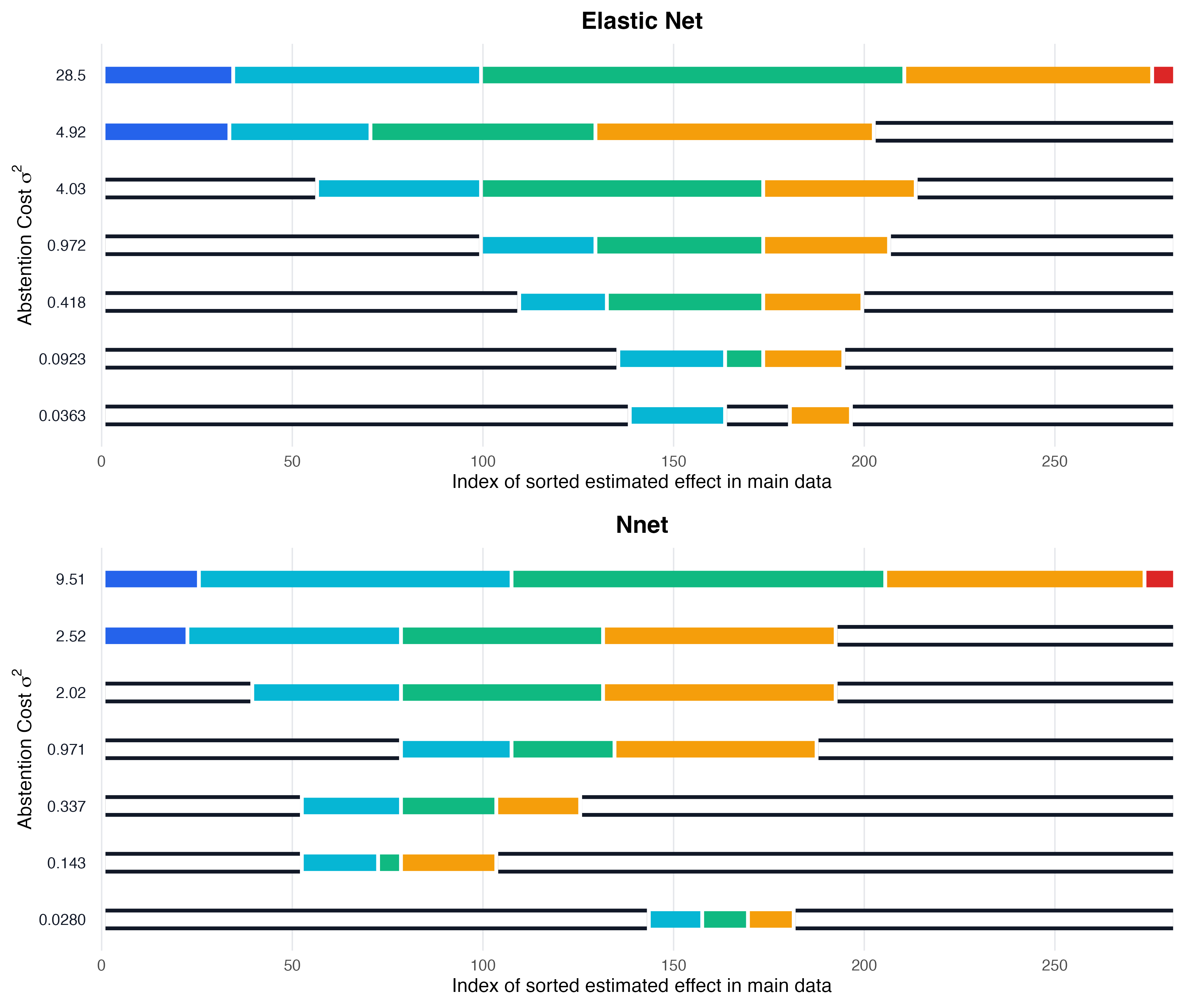}
    \caption{Abstention paths based on the sorted Elastic Net and Neural Net proxies. The $x$-axis is the index of the sorted ML proxy in the main data (281 total observations). Each horizontal row represents the $K=5$ groups associated with the abstention cost $\sigma^2$. Colored segments denote groups for which the researcher provides a summary of the ML proxies, while the white segments denote the groups for which the researcher declares ignorance.}
    \label{fig:abstention_application}
\end{figure}

Finally, we provide a simple illustration of how the archetype sets constructed via weighted $K$-means clustering can be used to complement the Classification Analysis (CLAN) of \cite{Victor25}. As noted by \cite{Victor25}, when the GATES analysis reveals substantial heterogeneity, \emph{``it is interesting to know the properties of the subpopulations that are the most and least affected''}. Using their CLAN methodology, \cite{Victor25} find statistical evidence suggesting that \emph{``the villages with low levels of pretreatment immunization are the most affected by the incentives.''} We focus on two of the baseline covariates that measure pretreatment immunization levels: the fraction of children receiving measles vaccines by 15 months of age, and the fraction of children receiving measles vaccines at credible locations. Figure \ref{fig:baseline_covariates} reports the values of these covariates for each group across 250 random data splits. For each split, we label the groups according to the value of their corresponding summary $\bar{\phi}$: the first group is always the one with the lowest values of the ML proxies (and hence the lowest summary), and the fifth group has always the highest values. For each data split, we calculate the median value of each covariate in each of the groups. The vertical and horizontal lines in each of the panels in Figure \ref{fig:baseline_covariates} represent the range of these medians in 95\% of the 250 random data splits, where the lowest 2.5\% values and the highest 97.5\% have been excluded from consideration. The labeled boxes in each panel represent the median value of the medians. We note that the groups constructed by weighted $K$-means clustering differ very clearly in baseline immunization levels. When the groups are constructed using quintiles of the ML proxy, the top groups seem to be very similar in terms of baseline immunization levels (as measured by the two covariates we consider). Thus, the figure further makes the point that the groups constructed using weighted $K$-means clustering will be very different from the ones constructed using equally-spaced quantiles.       

\begin{figure}[!h]
    \centering
    \includegraphics[width=0.9\textwidth]{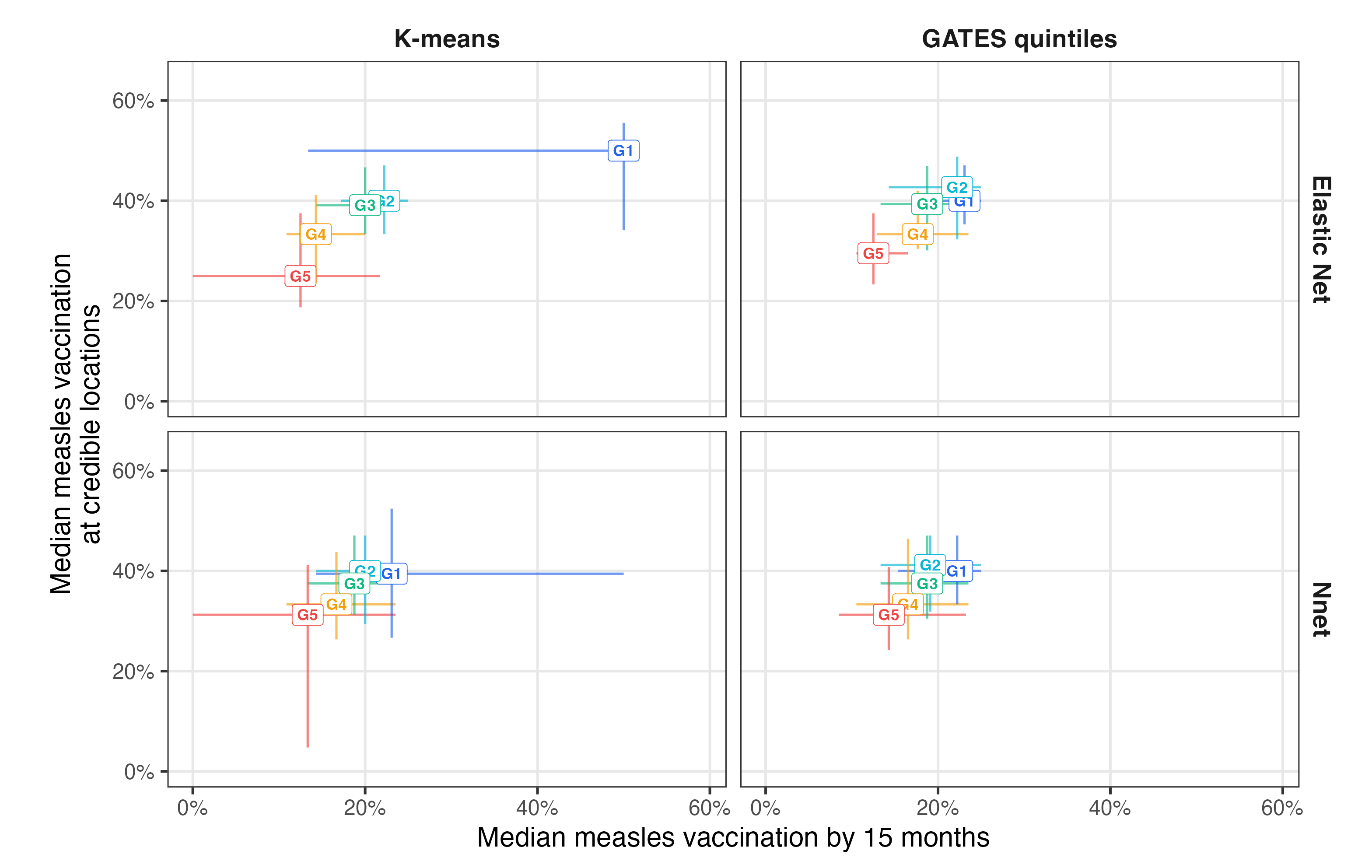}
    \caption{Pretreatment immunization levels of the $K=5$ groups. The $x$-axis is the fraction of children receiving measles vaccines by 15 months of age. The $y$-axis is the fraction of children receiving measles vaccines at credible locations. The groups with the larger indexes have larger values of the ML proxies. The vertical and horizontal lines in each of the panels represent the range of within-group medians of each covariate in 95\% of the 250 random data splits, where the lowest 2.5\% values and the highest 97.5\% of within group medians have been excluded from consideration. The labeled boxes in each panel represent the median value of the medians. }
    \label{fig:baseline_covariates}
\end{figure}

\section{Conclusion} 
\label{section:conclusion}

We used decision theory to analyze the \emph{archetype discovery problem} of \cite{breza2025generalizability}. In this problem, a researcher wants to summarize $N$ heterogeneous policy effects of interest that vary over a discrete set of covariates. The goal is to partition the set of covariates into $K<N$ groups---the \emph{archetype sets}---and to provide a summary of the policy effects for each group. Since there are different recommendations in the literature regarding how to present and summarize heterogeneous policy effects, decision theory is useful to guide their evaluation. 

The main message of this paper is that, under a  weighted mean-squared-error criterion, a procedure analogous to the \emph{Sorted Group Average Treatment Effects} (GATES) of \cite{Victor25} \emph{solves} the archetype discovery problem (in a sense we made precise through our main theoretical results). The key difference is that, in the optimal procedure, the archetype sets are obtained by weighted $K$-means clustering of the $N$ heterogeneous policy effects, instead of relying on $K$ equally-spaced quantiles. An important observation here is that clustering is based on the $N$ scalar policy effects, not on the covariates, as in the recent work of \cite{kim2026causal}. 

A key component of statistical decision theory is the analysis of the risk function of a statistical decision rule, which requires the specification of an action space, a loss function, and a statistical model. As mentioned in \cite{chamberlain2007decision}, \emph{``once one focuses on a risk function, it is natural to think about criteria like average risk and maximum risk that lend themselves to optimization.''} In this paper we analyzed both criteria. In terms of average risk, we showed that the procedure that minimizes average risk for a given prior can be obtained by clustering the different values of the posterior mean estimate of the policy effects of interest. This is a very general result, compatible with nonparametric statistical models and general priors. In terms of maximum or worst-case risk, we introduced a statistical model where the researcher observes an estimator of the policy effects of interest. We showed that an approximately minimax procedure in large samples can be obtained by clustering the estimator of the policy effects. For both the average and maximum risk, the $K$ archetype sets can be found using a simple, and well-known, dynamic programming algorithm. We also analyzed the worst-case regret of the approximately minimax procedure and provided an upper bound on the rate at which it converges to zero. 

While there are other recent papers in the literature that have suggested the use of $k$-means clustering for summarizing policy effects---see the recent work of \cite{kim2026causal}---we are not aware of any other paper in the literature showing that clustering the scalar policy effects (and not the covariates) has some decision-theoretic optimality, despite the high number of citations associated to the GATES procedure and the work of \cite{Victor25}.\footnote{The 2018 working paper version of \cite{chernozhukov2018generic} has more than 800 citations in google scholar.} 

We think there are several areas for future work. First, throughout this paper, we took the desired number of archetypes as given. It would be interesting to consider the possibility of using data-driven strategies for choosing the number of archetypes. See also Remark 3.1 in \cite{kim2026causal}. Second,  the recent work of \cite{nath2026inference} has developed an inferential framework for constructing confidence sets for cluster labels. It would be interesting to use such a framework to think about how to conduct inference about archetype sets or to adopt the randomization inference framework of \cite{imai2025comment,imai2025statistical}. More generally, it would be interesting to think about the best way to communicate uncertainty regarding archetype discovery to an audience of policymakers; perhaps building on the recent results in \cite{andrews2024communicating} and the framework for scientific communication of \cite{andrews2021model}. Third, as argued by \cite{breza2025generalizability}, allowing researchers to admit ignorance and developing tools that identify groups for which more data could be collected can improve the communication of scientific discoveries in social sciences. While we were able to show that, under some assumptions, a dynamic programming algorithm for clustering could be used to construct \emph{basins of ignorance}, it would be desirable to develop more general results for the archetype discovery problem with abstention. A potentially fruitful connection could be based on the work of \cite{DupinNielsen2023PartialKMeans}. Instead of allowing the researcher to pay a cost to admit ignorance about the policy effects associated to some covariate value, one could give the researcher a fixed budget in which ignorance can be admitted for $x\%$ of the observations in the sample. The results in \cite{DupinNielsen2023PartialKMeans} suggest that a problem like this could be computationally tractable, using an algorithm similar in spirit to the one used in this paper.        

\onehalfspacing

\bibliographystyle{ecta}
\bibliography{references.bib}

\doublespacing

\newpage 

\appendix 

\section{Proofs of Main Results} 
\label{app:main_results}
\subsection{Proof of Theorem \ref{thm:archetype discovery solution}} \label{subsec:archetype discovery solution} 

\emph{Preliminaries:} Given a function $\phi: \mathcal{X} \rightarrow \mathbb{R}$ that takes $N>K$ different values $\phi_1<\ldots<\phi_N$, define, for $i=1 \ldots N$, the sets 
\[ G^{\phi}_i \equiv \phi^{-1}(\phi_i) = \{x \in \mathcal{X} \: | \: \phi(x)=\phi_{i}\}.  \]
Note that the set $G^{\phi}_i$ collects the values of $x \in \mathcal{X}$ that satisfy $\phi(x)=\phi_i$. Note that the collection of sets $G^{\phi} \equiv \{G_{1}^{\phi}, \ldots, G^{\phi}_N\}$ forms a partition of $\mathcal{X}$. 

Let $A_K$ be the set of all functions $a:\{1,\ldots, N\} \rightarrow \mathbb{R}$ such that $|a(\{1,\ldots, N\})| \leq K$. Consider then the set of functions 
\[ \bar{\Phi}_K(A_K,G^{\phi}) \equiv \left\{ \bar{\phi} \in \bar{\Phi}_{K} \: \Big | \: \bar{\phi}(x) = \sum_{i=1}^{N} a(i) \mathbf{1}\{x \in G^\phi_i\} \:  \textrm{for some} \: a \in A_K   \right \}.  \]
It can be shown that the set $\bar{\Phi}_K(A_K,G^{\phi})$ coincides with the set of all functions that are measurable with respect to the $\sigma$-algebra generated by the partition $G^{\phi}$ and that take at most $K$ values.  

\emph{Upper Bound:} Note first that  
\begin{equation} \label{eq: upper bound proof Thm 1}
L(\bar{\phi}^*,\phi,p) = \inf_{\bar{\phi} \in \bar{\Phi}_{K}} L(\bar{\phi},\phi,p) \leq \inf_{\bar{\phi} \in \bar{\Phi}_K(A_K,G^{\phi})} L(\bar{\phi},\phi,p) = \inf_{\bar{\phi} \in \bar{\Phi}_K(A_K,G^{\phi})} \sum_{x \in \mathcal{X}} p(x) \left( \phi(x)-\bar{\phi}(x) \right)^2  . 
\end{equation}
We now use the structure of $\bar{\Phi}_K(A_K,G^{\phi})$ to re-write the right-hand side of \eqref{eq: upper bound proof Thm 1}. To this end, fix a function $\bar{\phi} \in \bar{\Phi}_K(A_K,G^{\phi})$ with associated function $a(\cdot)$ and write its image---that is, the set $a(\{1,\ldots,N\})$---as $\{a_1, \ldots, a_{\tilde{K}}\}$, where $\tilde{K} \leq K$. As usual, let $a^{-1}(a_k) = \{ i \in \{1,\ldots, N\} \: | \: a(i) = a_k \}$ denote the indexes of all the partitions $G^{\phi}_i$ over which the function $\bar{\phi}(x)$ takes the value $\alpha_k$. Algebra shows that 
\begin{eqnarray*} 
\sum_{x \in \mathcal{X}} p(x) \left( \phi(x)-\bar{\phi}(x) \right)^2 &=& \sum_{k=1}^{\tilde{K}} \left( \sum_{x \in \cup_{i \in a^{-1}(a_k)} G^{\phi}_i} p(x) \left( \phi(x)-a_k \right)^2 \right) \\
&=& \sum_{k=1}^{\tilde{K}} \left( \sum_{i \in a^{-1}(a_k)} \left(  \sum_{x \in  G^{\phi}_i} p(x) \left( \phi(x)-a_k \right)^2 \right)\right) \\
&=& \sum_{k=1}^{\tilde{K}} \left( \sum_{i \in a^{-1}(a_k)} \left(  \sum_{x \in  G^{\phi}_i} p(x) \left( \phi_i-a_k \right)^2 \right)\right) \\
&=& \sum_{k=1}^{\tilde{K}} \left( \sum_{i \in a^{-1}(a_k)} \left(  \sum_{x \in  G^{\phi}_i} p(x)  \right) \left( \phi_i-a_k \right)^2 \right) \\
&=& \sum_{k=1}^{\tilde{K}} \left( \sum_{i \in a^{-1}(a_k)} p_i \left( \phi_i-a_k \right)^2 \right).
\end{eqnarray*}
Note that the function $a(\cdot)$ plays two roles in the summation above. First, the function $a(\cdot)$ partitions the index set $\{1,\ldots, N\}$ through the inverse images $\{a^{-1}(a_k)\}_{k=1}^{\tilde{K}}$. The interpretation of each set $a^{-1}(a_k)$ is that this set collects the indexes of all partitions $G^{\phi}_i$ over which the function $\bar{\phi}$ takes the value $a_k$. Second, the image of the function $a(\cdot)$ determines the specific values $a_k$. These two roles can be separated as follows. Let $c:\{1,\ldots, N\} \rightarrow \{1,\ldots, K\}$ be a surjective function and let $\mu$ be an arbitrary subset of $\mathbb{R}$ containing at most $K$ different values, and let $\mu_k$ denote its $k$-th value. The right-hand of \eqref{eq: upper bound proof Thm 1} is thus equivalent to solving
\begin{equation} \label{eq: upper bound proof Thm 1 simplified} 
\inf_{c:\{1,\ldots, N\}\rightarrow \{1,\ldots, K\}} \left( \inf_{\mu \textrm{ s.t.} |\mu| \leq K } \sum_{k=1}^{K} \left( \sum_{\{i \:| \: c(i)=k\}} p_i \left( \phi_i-\mu_k \right)^2 \right) \right). 
\end{equation}
But note that given a function $c(\cdot)$, the value 
\[ \mu_k(c) \equiv \frac{\sum_{\{i \: : \: c(i)=k\}} p_i \phi_i }{\sum_{\{i \: : \: c(i)=k\}} p_i} \] 
solves the problem 
\[ \min_{\mu_k \in \mathbb{R}}\sum_{\{i \:| \: c(i)=k\}} p_i \left( \phi_i-\mu_k \right)^2. \]
This means that \eqref{eq: upper bound proof Thm 1 simplified} can be written as
\begin{equation} \label{eq: upper bound proof Thm 1 final}
\inf_{c:\{1,\ldots, N\}\rightarrow \{1,\ldots, K\}} \left( \sum_{k=1}^{K} \left( \sum_{\{i \:| \: c(i)=k\}} p_i \left( \phi_i-\mu_k(c) \right)^2 \right) \right). 
\end{equation}
This shows that the value \eqref{eq: upper bound proof Thm 1 final} is an upper bound for $\inf_{\bar{\phi} \in \bar{\Phi}_{K}} L(\bar{\phi},\phi,p)$ (the value of the archetype discovery problem).

\emph{Lower Bound:} We now want to show that  \eqref{eq: upper bound proof Thm 1 final} is also a lower bound for the value of the archetype discovery problem. The main challenge here is that there are several functions in $\bar{\Phi}_{K}$ that need not be measurable with respect to the $\sigma$-algebra generated by the partition $G^{\phi}$. We will show below that for any function $\bar{\phi} \in \bar{\Phi}_{K}$ there always exists a function $\bar{\phi}' \in \bar{\Phi}_K(A_K,G^{\phi})$ with better payoff.

Take any function \(\bar\phi\in\bar\Phi_K\), and write its image as
\[
\bar\phi(\mathcal{X})=\{\alpha_1,\dots,\alpha_{\tilde{K}}\}.
\]
As usual, let $\bar{\phi}^{-1}(\alpha_k) = \{ x \in \mathcal{X} \: | \: \bar{\phi}(x)=\alpha_k\}$
For each \(i\in\{1,\dots,N\}\) and each
\(k\in\{1,\dots,\tilde{K}\}\), define the probabilities
\[
q_{ik}
\equiv
\sum_{x \in G^{\phi}_i \cap \bar{\phi}^{-1}(\alpha_k)} p(x) = \sum_{x\in G_i^\phi} p(x)\mathbf{1}\{\bar\phi(x)=\alpha_k\}.
\]
Note that $q_{ik}$ represents the mass assigned to the elements inside $G^{\phi}_i$ for which the the function $\bar{\phi}(x)=\alpha_k$. Note also that for every fixed $i \in \{1,\ldots, N\}$
\[
\sum_{k=1}^{\tilde{K}} q_{ik}
=
\sum_{x\in G_i^\phi} p(x)
=
p_i.
\]
Since \(\phi(x)=\sum_{i=1}^{N}\phi_i \mathbf{1}\{x \in G_i^{\phi}\}\) , we can write
\begin{align*}
L(\bar\phi,\phi,p)
&=
\sum_{x\in X} p(x)\bigl(\phi(x)-\bar\phi(x)\bigr)^2 \\
&=
\sum_{i=1}^N \sum_{x\in G_i^\phi} p(x)\bigl(\phi_i-\bar\phi(x)\bigr)^2 \\
&= \sum_{i=1}^N \left( \sum_{k=1}^{\tilde{K}} \sum_{x\in G_i^\phi \cap \bar{\phi}^{-1}(\alpha_k)} p(x)\bigl(\phi_i-\bar\phi(x)\bigr)^2 \right)  \\
&= \sum_{i=1}^N \left( \sum_{k=1}^{\tilde{K}} \sum_{x\in G_i^\phi \cap \bar{\phi}^{-1}(\alpha_k)} p(x)\bigl(\phi_i-\alpha_k\bigr)^2 \right)  \\
&=
\sum_{i=1}^N \sum_{k=1}^{\tilde{K}} q_{ik}(\phi_i-\alpha_k)^2.
\end{align*}

Now, define the function $a:\{1,\ldots, N\} \rightarrow \{1,\ldots, K\}$ to be any function satisfying
\[
a(i)\in \arg\min_{k\in\{1,\dots,\tilde{K}\}} (\phi_i-\alpha_k)^2.
\]

We use the function \(a(\cdot)\) to define a new function \(\bar\phi':X\to\mathbb R\) belonging to the set \(\bar{\Phi}_K(A_K,G^{\phi})\) as follows: 
\[
\bar\phi'(x)
\equiv
\sum_{i=1}^N \alpha_{a(i)}\mathbf{1}\{x\in G_i^\phi\}.
\] 
We now show that the loss of \(\bar\phi'\) is smaller than the loss of \(\bar\phi\). To see this, note that for each \(i\),
\begin{align*}
\sum_{x\in G_i^\phi} p(x)\bigl(\phi(x)-\bar\phi'(x)\bigr)^2
&=
\left( \sum_{x\in G_i^\phi} p(x)\right)\bigl(\phi_i-\alpha_{a(i)}\bigr)^2 \\
&=
p_i(\phi_i-\alpha_{a(i)})^2.
\end{align*}
Because \(a(i)\) minimizes \((\phi_i-\alpha_k)^2\) over \(k\in\{1,\dots,\tilde{K}\}\), we have
\begin{eqnarray*}
p_i(\phi_i-\alpha_{a(i)})^2 &=& \sum_{x \in G^{\phi}_i} p(x)(\phi_i-\alpha_{a(i)})^2 \\
&\le&
\sum_{x \in G^{\phi}_i} p(x) (\phi_i-\bar{\phi}(x))^2. 
\end{eqnarray*}
Therefore, 
\begin{eqnarray*}
L(\bar{\phi}',\phi,p) &=& \sum_{x \in \mathcal{X}}  p(x)\bigl(\phi(x)-\bar\phi'(x)\bigr)^2 \\
&=& \sum_{i=1}^{N} \sum_{x \in G^{\phi}_i} p(x) \bigl(\phi(x)-\bar\phi'(x)\bigr)^2 \\
&\leq& \sum_{i=1}^{N} \sum_{x \in G^{\phi}_i} p(x) \bigl(\phi(x)-\bar\phi(x)\bigr)^2 \\
&=& L(\bar{\phi},\phi,p)
\end{eqnarray*}
Thus, we have shown that for every function \(\bar\phi\in\bar\Phi_K\) there exists a function 
\(\bar\phi' \in \bar{\Phi}_K(A_K,G^{\phi})\) with smaller loss.  

This means that
\[
L(\bar{\phi}^*,\phi,p) = \inf_{\bar{\phi} \in \bar{\Phi}_{K}} L(\bar{\phi},\phi,p) \geq \inf_{\bar{\phi} \in \bar{\Phi}_K(A_K,G^{\phi})} L(\bar{\phi},\phi,p) . 
\]

\emph{Conclusion:} Using the upper and lower bounds above, we have shown that 

\begin{eqnarray*}
L(\bar{\phi}^*,\phi,p) &=&  \inf_{\bar{\phi} \in \bar{\Phi}_{K}} L(\bar{\phi},\phi,p) \\
&=& \inf_{\bar{\phi} \in \bar{\Phi}_K(A_K,G^{\phi})} L(\bar{\phi},\phi,p)\\
&=& \inf_{c:\{1,\ldots, N\}\rightarrow \{1,\ldots, K\}} \left( \sum_{k=1}^{K} \left( \sum_{\{i \:| \: c(i)=k\}} p_i \left( \phi_i-\mu_k(c) \right)^2 \right) \right), 
\end{eqnarray*}
where the last infimum is taken over all $K$-clustering functions. Let $c^*$ be the solution of such a clustering problem. As shown in the construction of the upper bound, the optimal $\bar{\phi}^*$ can be constructed from $c^*$ as follows: 
\[ \bar{\phi}^*(x) = \sum_{i=1}^{N}\mu_{c^*(i)}(c^*) \mathbf{1}\{x \in G^{\phi}_i\}.   \]
The $k$-th archetype set associated to the oracle solution $\phi^*$ equals
\begin{eqnarray*}
\mathcal{A}^*_k &\equiv& \{ x \in \mathcal{X} \: \mid \: \bar{\phi}^*(x) = \bar{\phi}^*_k \}.
\end{eqnarray*}
The function $\bar{\phi}^*$ can be defined alternatively as in the statement of Theorem \ref{thm:archetype discovery solution} by defining $i:\mathcal{X} \rightarrow \{1,\ldots, N\}$ as the function such that assigns each $x$ to the value that $\phi(\cdot)$ takes at such point, this is: $\phi(x) = \phi_{i(x)}$. 

\subsection{Proof of Theorem \ref{thm:posterior_loss_minimization}} \label{subsec:posterior_loss_minimizaation} 
It suffices to show that $d^*(z)$ is a solution to the posterior loss minimization problem
\begin{equation} \label{eq:appendix_posterior_loss_minimization}
\inf_{\bar{\phi} \in \bar{\Phi}_{K}} E_{\phi \sim \pi} \left[ L(\bar{\phi},\phi,p) \: | \: z \right] = \inf_{\bar{\phi} \in \bar{\Phi}_{K}} \left( \sum_{x \in \mathcal{X}} p(x) \mathbb{E}_{\phi \sim \pi} \left[ (\phi(x)-\bar{\phi}(x))^2  \: | \: z\right] \right). 
\end{equation}
Fix a data realization $z \in \mathcal{Z}$, and let
\[ \widehat{\phi}(x) \equiv \mathbb{E}_{\phi \sim \pi}[\phi(x) \: | \: z]  \]
denote the posterior mean of the function $\phi(\cdot)$ given $z \in \mathcal{Z}$. Adding and subtracting the posterior mean function we get that for each $x \in \mathcal{X}$ we have 
\begin{eqnarray*}
\mathbb{E}_{\phi \sim \pi} \left[ (\phi(x)-\bar{\phi}(x))^2  | z\right] &=& \mathbb{E}_{\phi \sim \pi} \left[ (\phi(x)- \widehat{\phi}(x)+\widehat{\phi}(x)- \bar{\phi}(x))^2  | z\right] \\
&=& \mathbb{E}_{\phi \sim \pi} \left[ (\phi(x)- \widehat{\phi}(x))^2 | z \right]+ (\widehat{\phi}(x)- \bar{\phi}(x))^2. 
\end{eqnarray*}
Thus, \eqref{eq:appendix_posterior_loss_minimization} equals
\[\sum_{x \in \mathcal{X}} p(x) \mathbb{E}_{\phi \sim \pi} \left[ (\phi(x)- \widehat{\phi}(x))^2 | z \right]+ \inf_{\bar{\phi} \in \bar{\Phi}_{K}}  \sum_{x \in \mathcal{X}}p(x)(\widehat{\phi}(x)- \bar{\phi}(x))^2. \]
But the minimization problem at the end of the expression above is the same as the oracle archetype discovery problem when $\widehat{\phi}$ is taken as the true $\phi$.

\subsection{Proof of Theorem \ref{thm:epsilon-minimax-archetypes}}
\label{app:proof-epsilon-minimax-theorem}

\begin{proof}
Throughout this section, we use the norm
\[
\|\phi\| \equiv \sup_{x\in\mathcal X}|\phi(x)|.
\]
We establish Theorem \ref{thm:epsilon-minimax-archetypes} using the following high-level conditions which we verify in Appendix~\ref{app:verification-minimax}:
\paragraph{Condition 1 (Uniform consistency of $\widehat{\phi}$).}
For every \(\varepsilon>0\),
\[
\sup_{\theta \in \Theta}
P_{\theta} \!\left(
\|\hat\phi-\phi\|>\varepsilon
\right)\to 0
\qquad\text{as } I \to\infty.
\]

\paragraph{Condition 2 (Boundedness of the loss).}
There exists a constant \(M>0\) such that for every \(\bar\phi\in\bar\Phi_K(B)\) and every \(\phi \in \Theta\),
\[
|L(\bar\phi;\phi,p)|\le M.
\]

\paragraph{Condition 3 (Uniform Lipschitz continuity of the loss).}
There exists a constant \(C>0\) such that for all \(\phi,\phi'\in \Theta\),
\[
\sup_{\bar\phi\in\bar\Phi_K(B)}
\left|
L(\bar\phi;\phi,p)-L(\bar\phi;\phi',p)
\right|
\le
C\|\phi-\phi'\|.
\]

The proof goes as follows. Fix \(\varepsilon>0\). We will show that for all sufficiently large \(I\),
\[
\sup_{\theta \in \Theta}R(d_{\textrm{plug-in}},\theta)\le V(I,\Theta)+\varepsilon.
\]
For fixed $\eta>0$ and $\phi \in \Theta$ let
\[
A_{I}(\phi,\eta):=\{ \hat{\phi} \: | \:   \|\hat\phi-\phi\|\le \eta\},
\]
denote the set of all data realizations for which $\widehat{\phi}$ is at most $\eta$ away from $\phi$. On the event \(A_I(\phi,\eta)\), Condition 3 implies
\[
L(d_{\textrm{plug-in}}(\widehat{\phi});\phi,p)
\le
L( d_{\textrm{plug-in}}(\widehat{\phi});\widehat{\phi}_n,p)+C\eta.
\]
Moreover, because \(d_{\textrm{plug-in}}(\widehat{\phi})\) minimizes \(\bar\phi\mapsto L(\bar\phi;\hat\phi,p)\) for each $\widehat{\phi}$, 
\[
L(d_{\textrm{plug-in}}(\widehat{\phi});\hat\phi,p)
\le
L(d^*(\widehat{\phi});\hat\phi,p),
\]
where $d^*$ is the minimax rule that attains the value $V(\Theta,I)$. Applying Condition 3 again on \(A_I(\phi,\eta)\),
\[
L(d^*(\widehat{\phi});\hat\phi,p)
\le
L(d^*(\widehat{\phi});\phi,p)+C\eta.
\]
Therefore, on $A_I(\phi,\eta)$
\[
L(d_{\textrm{plug-in}}(\widehat{\phi});\phi,p)
\le
L(d^*(\widehat{\phi});\phi,p)+2C\eta.
\]

Using the inequality above and the bound of the loss in Condition 2, we obtain
\begin{align*}
R(d_{\textrm{plug-in}},\theta)
&=
\mathbb{E}_{\theta} \!\left[
L(d_{\textrm{plug-in}}(\widehat{\phi});\phi,p)\mathbf 1\{A_I(\phi,\eta)\}
\right]
+
\mathbb{E}_{\theta}\!\left[
L(d_{\textrm{plug-in}}(\widehat{\phi});\phi,p)\mathbf 1\{A_I(\phi,\eta)^c\}
\right]
\\
&\le
\mathbb{E}_{\theta}\!\left[
L(d^*(\widehat{\phi});\phi,p)\mathbf 1\{A_I(\phi,\eta)\}
\right]
+
2C\eta
+
\left( M \cdot P_{\theta} \!\left(A_I(\phi,\eta)^c\right) \right)
\\
&\le
R(d^*,\theta)+2C\eta+M P_{\theta}\!\left(A_I(\phi,\eta)^c\right).
\end{align*}
Since \(R(d^*,\theta)\le V(I,\Theta)\) for every \(\theta\in\Theta\), it follows that
\[
\sup_{\theta \in \Theta}R(d_{\textrm{plug-in}}(\widehat{\phi}),\theta)
\le
V(I,\Theta) 
+
2C\eta
+
M\sup_{\theta\in \Theta}P_{\theta}\!\left(A_I(\phi,\eta)^c\right).
\]

Now choose \(\eta=\varepsilon/(4C)\). Then \(2C\eta=\varepsilon/2\). By the uniform consistency of \(\hat\phi\), for this fixed \(\eta\) there exists \(I(\varepsilon)\) such that for all \(I\ge I(\varepsilon)\),
\[
M\sup_{\theta \in \Theta}P_{\theta}\!\left(A_I(\phi,\eta)^c\right)\le \varepsilon/2.
\]
Hence, for all \(I\ge I(\varepsilon)\),
\[
\sup_{\theta \in \Theta}R(d_{\textrm{plug-in}}(\widehat{\phi}),\theta)\le V(I,\Theta)+\varepsilon.
\]
This proves that \(d_{\textrm{plug-in}}(\widehat{\phi})\) is \(\varepsilon\)-minimax for all sufficiently large \(I\).
 
\end{proof}

\subsection{Proof of Theorem \ref{thm:minimax-rate}}
\label{app:proof-rate-epsilon-minimax-theorem}

\begin{proof}
 
Fix $\theta=(\phi,P)\in\Theta(B)$. Let $a_{\mathrm{oracle}}$ denote the oracle solution to the archetype discovery problem. By definition,
\[
L(a_{\mathrm{oracle}};\phi,p) = \inf_{\bar{\phi} \in \bar{\Phi}_{K}(B)} L(\bar{\phi};\phi,p).
\]
Since $d_{\mathrm{plug\text{-}in}}(\hat\phi)$ minimizes
$\bar\phi \mapsto L(\bar\phi;\hat\phi,p)$ over $\bar\Phi_K(B)$ pointwise,
and since $a_{\mathrm{oracle}}\in\bar\Phi_K(B)$, we have
\[
L(d_{\mathrm{plug\text{-}in}}(\hat\phi);\hat\phi,p)
\;\le\;
L(a_{\mathrm{oracle}};\hat\phi,p)
\]
for every data realization. Using twice the Lipschitz Condition 3 used in the proof of Theorem \ref{thm:epsilon-minimax-archetypes}, gives
\begin{align*}
\mathbb{E}_{\widehat{\phi} \sim (\phi,P)} \left[ \mathcal{L}\left( d_{\mathrm{plug\text{-}in}}(\hat\phi);\phi,p \right) \right]
&=
\mathbb{E}_{\widehat{\phi} \sim (\phi,P)}  \!\left[
L(d_{\mathrm{plug\text{-}in}}(\hat\phi);\phi,p)
-
L(a_{\mathrm{oracle}};\phi,p)
\right] \\
&\le
\mathbb{E}_{\widehat{\phi} \sim (\phi,P)} \!\left[
L(d_{\mathrm{plug\text{-}in}}(\hat\phi);\phi,p)
-
L(d_{\mathrm{plug\text{-}in}}(\hat\phi);\hat\phi,p)
\right] \\
&\quad+
\mathbb{E}_{\widehat{\phi} \sim (\phi,P)} \!\left[
L(a_{\mathrm{oracle}};\hat\phi,p)
-
L(a_{\mathrm{oracle}};\phi,p)
\right] \\
&\le
8B \,
\mathbb{E}_{\widehat{\phi} \sim (\phi,P)} \!\left[\|\hat\phi-\phi\|_\infty\right].
\end{align*}
By the maximal inequality for sub-Gaussian random variables (Theorem~2.5 of \citet{boucheron2013concentration}), if for each \(x\in\mathcal X\),
\[
Z_x := \hat{\phi}(x)-\phi(x)
\]
is sub-Gaussian with variance proxy \(\bar{\sigma}^2/I\), then
\[
\mathbb{E}\bigl[\|\hat{\phi}-\phi\|_{\infty}\bigr]
=
\mathbb{E}\Bigl[\max_{x\in\mathcal X}|Z_x|\Bigr]
\le
\bar{\sigma}\sqrt{\frac{2\log(2|\mathcal X|)}{I}}.
\]
Therefore,
\[
\sup_{\theta\in\Theta(B)}
\mathbb{E}_{\hat{\phi}}\bigl[\|\hat{\phi}-\phi\|_{\infty}\bigr]
\le
\bar{\sigma}\sqrt{\frac{2\log(2|\mathcal X|)}{I}}.
\]
Taking the supremum over $\theta\in\Theta(B)$ and using the subgaussian
tail bound inequality,
\[
\sup_{\theta \in \Theta(B)}
\mathbb{E}_{\hat\phi}\!\left[\|\hat\phi-\phi\|_\infty\right]
\;\le\;
\bar\sigma\sqrt{\frac{2 \log(2|\mathcal{X}|)}{I}}.
\]
Therefore,
\[
\adjustlimits \inf_{d}\sup_{\theta \in \Theta(B)}
\mathbb{E}_{\widehat{\phi} \sim (\phi,P)} \left[ \mathcal{L}\left( d( \widehat{\phi});\phi,p \right) \right]
\;\le\; 
\sup_{\theta \in \Theta(B)}
\mathbb{E}_{\widehat{\phi} \sim (\phi,P)} \left[ \mathcal{L}\left( d_{\textrm{plug-in}}( \widehat{\phi});\phi,p \right) \right] 
\leq 
8B \bar\sigma\, \sqrt{\frac{2 \log(2|\mathcal{X}|)}{I}},
\]
where $\bar{\sigma}$ is the largest value of $\sigma_{x}$. 

\end{proof}

\subsection{Proof of Theorem \ref{thm:thm_oracle_abstention_simple}} \label{subsec:proof_oracle_abstention}
For $i=1 \ldots N$, define the sets 
\[ G^{\phi}_i \equiv \phi^{-1}(\phi_i) = \{x \in \mathcal{X} \: | \: \phi(x)=\phi_{i}\}.  \]
The loss for the archetype discovery problem with abstention can be re-written as
\begin{eqnarray*}\label{eq:} 
L(\bar{\phi},\pi;\phi,p) &\equiv& \sum_{x\in\mathcal{X}} p(x) \Big[ \pi(x) \big(\phi(x) - \bar\phi(x)\big)^2 + (1-\pi(x))\, \sigma^2 \Big] \\
&=&  \sigma^2 + \sum_{x\in\mathcal{X}} p(x) \Big[ \pi(x) \left( \big(\phi(x) - \bar\phi(x)\big)^2 - \sigma^2 \right) \Big]  \\
&=& \sigma^2 + \sum_{i=1}^{N} \sum_{x\in G^{\phi}_i} p(x) \Big[ \pi(x) \left( \big(\phi_i - \bar\phi(x)\big)^2 - \sigma^2 \right) \Big].
\end{eqnarray*}
For any $\bar{\phi} \in \bar{\Phi}^*_K(A_K,G^{\phi})$, we can write
\[\bar{\phi}(x) = \sum_{i=1}^{N} a(i) \mathbf{1}\{x \in G^\phi_i\} \]
for some function $a:\{1,\ldots, N\} \rightarrow \mathbb{R}$ such that $|a(\{1,\ldots, N\})| = K$. Consequently,  
\[L(\bar{\phi},\pi;\phi,p) = \sigma^2 + \sum_{i=1}^{N} \sum_{x\in G^{\phi}_i} p(x) \Big[ \pi(x) \left( \big(\phi_i - a(i)\big)^2 - \sigma^2 \right) \Big]. \]
In fact, let the $K$-values in $a\{1,\ldots, N\}$ be denoted as $\{\bar{\phi}_1, \ldots, \bar{\phi}_{K}\}$. Since $\pi \in \Pi_{K}(\bar{\phi})$, the function $\pi(x)$ can be written as 
\[ \pi(x)= \sum_{k=1}^{K} \pi_k \mathbf{1}\{ x \in \bar{\phi}^{-1}(\bar{\phi}_k)  \}, \]
with $\pi_{k} \in \{0,1\}$ for every $k=1,\ldots, K$. Therefore, recalling the definition $p_i \equiv \sum_{x\in G^{\phi}_i} p(x)$ we have
\begin{eqnarray*}
L(\bar{\phi},\pi;\phi,p) &=& \sigma^2 + \sum_{k=1}^{K} \sum_{\{i \: | \: a(i)=\bar{\phi}_k \}} \left( \sum_{x\in G^{\phi}_i} p(x) \Big[ \pi(x) \left( \big(\phi_i - \bar{\phi}_k\big)^2 - \sigma^2 \right) \Big] \right) \\
&=&  \sigma^2 + \sum_{k=1}^{K} \sum_{\{i \: | \: a(i)=\bar{\phi}_k \}} \left( \sum_{x\in G^{\phi}_i} p(x) \Big[ \pi_k \left( \big(\phi_i - \bar{\phi}_k\big)^2 - \sigma^2 \right) \Big] \right)\\
&=& \sigma^2 + \sum_{k=1}^{K} \pi_k \sum_{\{i \: | \: a(i)=\bar{\phi}_k \}} \left( p_i \Big[   \big(\phi_i - \bar{\phi}_k\big)^2 - \sigma^2  \Big] \right).
\end{eqnarray*}
Note that we can first fix $\bar{\phi}$ and minimize over all $\pi \in \Pi_{K}(\bar{\phi})$. This can be done by setting each $\pi_k$ with $k=1,\ldots, K$ to $\pi_k= \pi_k^*(\bar{\phi})$, where
\[ \pi^*_k(\bar{\phi}) \equiv \mathbf{1}\left \{ \sum_{\{i \: | \: a(i)=\bar{\phi}_k \}} \left( p_i \Big[   \big(\phi_i - \bar{\phi}_k\big)^2 - \sigma^2  \Big] \right)  \leq 0  \right \}. \]
This means that the profiled loss (after optimizing over $\pi$) can be written as 
\[ L^*(\bar{\phi};\phi,p) = \sigma^2 + \sum_{k=1}^{K} \min \left\{ \sum_{\{i \: | \: a(i)=\bar{\phi}_k \}}  p_i \Big[   \big(\phi_i - \bar{\phi}_k\big)^2 - \sigma^2  \Big],0 \right\}.  \]
It remains to optimize this loss over the function $\bar{\phi}$. Note first that that the function $a$ assigns each element $i \in \{1,\ldots, N\}$ to a value in $\{ \bar{\phi}_1, \ldots, \bar{\phi}_{K}\}$. Let $\textrm{Index}$ be the function that takes an element in the set $\{ \bar{\phi}_1, \ldots, \bar{\phi}_{K}\}$ an retrieves its index. We can then use $a$ to define a clustering function $c: \{1, \ldots, N\} \rightarrow \{1,\ldots, K\}$ as $c(i)=\textrm{Index}(a(i))$. The function $L^*(\bar{\phi};\phi,p)$ can then be written in terms of the clustering function, $c$, and the values $\bar{\phi}_1 \ldots, \bar{\phi}_{K}$. In fact, minimizing $L^*(\bar{\phi};\phi,p)$ over $\bar{\phi} \in \bar{\Phi}^*_K(A_K,G^{\phi})$ is equivalent to minimizing the function 
\[ L^*(c, \bar{\phi}_1,\ldots, \bar{\phi}_{K};\phi,p) = \sum_{k=1}^{K} \min \left\{ \sum_{\{i \: | \: c(i)=k \}}  p_i \Big[   \big(\phi_i - \bar{\phi}_k\big)^2 - \sigma^2  \Big],0 \right\} \]
over clustering functions, $c$, and values $\bar{\phi}_1 \ldots, \bar{\phi}_{K}$. The same argument we used in Theorem \ref{thm:archetype discovery solution} shows that $\bar{\phi}_{k}$ can be chosen as the cluster centers $\mu_{k}(c)$ defined in Theorem \ref{thm:archetype discovery solution}. This means that an oracle solution to the archetype discovery problem with abstention in \eqref{eq:archetype discovery problem abstention} can be found by solving 
\[ \min_{c:\{1,\ldots, N\} \rightarrow \{1,\ldots, K\}} \sum_{k=1}^{K} \min \left\{ \sum_{\{i \: | \: c(i)=k \}}  p_i \Big[   \big(\phi_i - \mu_{k}(c)\big)^2 - \sigma^2  \Big],0 \right\}.   \]
Let $c^*$ denotes the minimizer of this function. Then, the oracle solution to the archetype discovery problem with abstention in \eqref{eq:archetype discovery problem abstention} can be obtained by setting
$\bar{\phi}(x) = \mu_{c^*(i(x))}(c^*)$.

\newpage

\section{Additional Derivations} 
\label{app:support_materials}

\subsection{Verification of the conditions for asymptotic minimax optimality}
\label{app:verification-minimax}

In this appendix, we verify the three high-level conditions used to show that the plug-in rule is \(\varepsilon\)-minimax. 

We verify Conditions 1-3 under the assumptions of Theorem \ref{thm:epsilon-minimax-archetypes}. Let
\[
\bar\sigma:=\sup_{x\in\mathcal X}\sigma_x<\infty.
\]

\noindent\textbf{Verification of Condition 1.}
Under the statistical model in \eqref{eq:estimator},
\[
\hat\phi(x)-\phi(x)=\frac{\sigma_x}{\sqrt I} u_x,
\qquad
\{u_x\}_{x \in \mathcal{X}}\sim P,
\]
Fix $\eta>0$, and let
\[
\bar\sigma:=\sup_{x\in\mathcal X}\sigma_x<\infty.
\]
For any $x\in\mathcal X$ and any $r>0$, Chernoff's bound and the standard equivalence between the
moment-generating-function and tail formulations of subgaussianity (see, e.g.,
\cite[Proposition 2.5.2]{Vershynin2018}) imply
\[
P_\theta(u_x>r)
\le
\inf_{\lambda>0} e^{-\lambda r}\mathbb{E}_\theta[e^{\lambda u_x}]
\le
\inf_{\lambda>0} e^{-\lambda r+\lambda^2/2}
=
e^{-r^2/2}.
\]
Applying the same argument to $-u_x$ yields
\[
P_\theta(|u_x|>r)\le 2e^{-r^2/2}.
\]
Therefore, for every $x\in\mathcal X$,
\[
P_\theta\!\left(|\hat\phi(x)-\phi(x)|>\eta\right)
=
P_\theta\!\left(|u_x|>\frac{\sqrt I\,\eta}{\sigma_x}\right)
\le
2\exp\!\left(-\frac{I\eta^2}{2\sigma_x^2}\right)
\le
2\exp\!\left(-\frac{I\eta^2}{2\bar\sigma^2}\right).
\]
Using the union bound,
\begin{align*}
P_\theta\!\left(\|\hat\phi-\phi\|>\eta\right)
&=
P_\theta\!\left(\sup_{x\in\mathcal X}|\hat\phi(x)-\phi(x)|>\eta\right) \\
&\le
\sum_{x\in\mathcal X}
P_\theta\!\left(|\hat\phi(x)-\phi(x)|>\eta\right) \\
&\le
2|\mathcal X|
\exp\!\left(-\frac{I\eta^2}{2\bar\sigma^2}\right).
\end{align*}
Since the bound does not depend on $\theta=(\phi,P)$, it follows that
\[
\sup_{\theta\in\Theta}
P_\theta\!\left(\|\hat\phi-\phi\|>\eta\right)
\le
2|\mathcal X|
\exp\!\left(-\frac{I\eta^2}{2\bar\sigma^2}\right).
\]
Hence, if $\log |\mathcal X|/I\to 0$, then for every fixed
$\eta>0$,
\[
\sup_{\theta\in\Theta}
P_\theta\!\left(\|\hat\phi-\phi\|>\eta\right)\to 0
\]
which proves Condition~1.

\noindent\textbf{Verification of Condition 2.}
Under the boundedness restrictions,
\[
|\phi(x)-\bar\phi(x)|\le 2B
\qquad\forall x\in\mathcal X.
\]
Therefore,
\[
L(\bar\phi;\phi,p)
=
\sum_{x\in\mathcal X}p(x)\bigl(\phi(x)-\bar\phi(x)\bigr)^2
\le
\sum_{x\in\mathcal X}p(x)(2B)^2
=
4B^2.
\]
Since the loss is nonnegative, it follows that
\[
|L(\bar\phi;\phi,p)|\le 4B^2
\qquad
\forall \bar\phi\in\bar\Phi_K,\ \phi\in\Phi.
\]
Thus Condition 2 holds with \(M \equiv 4B^2\).

\medskip

\noindent\textbf{Verification of Condition 3.}
Fix \(\bar\phi\in\bar\Phi_K(B)\), and let \(\phi,\phi'\in\Theta\). Then
\[
L(\bar\phi;\phi,p)-L(\bar\phi;\phi',p)
=
\sum_{x\in\mathcal X}
p(x)\left[
(\phi(x)-\bar\phi(x))^2-(\phi'(x)-\bar\phi(x))^2
\right].
\]
Expanding and regrouping terms gives
\[
L(\bar\phi;\phi,p)-L(\bar\phi;\phi',p)
=
\sum_{x\in\mathcal X}
p(x)\bigl(\phi(x)-\phi'(x)\bigr)
\bigl(\phi(x)+\phi'(x)-2\bar\phi(x)\bigr).
\]
Therefore,
\begin{align*}
\left|L(\bar\phi;\phi,p)-L(\bar\phi;\phi',p)\right|
&\le
\sum_{x\in\mathcal X}
p(x)\,
|\phi(x)-\phi'(x)|\,
|\phi(x)+\phi'(x)-2\bar\phi(x)|.
\end{align*}
Because
\[
|\phi(x)|\le B,\qquad
|\phi'(x)|\le B,\qquad
|\bar\phi(x)|\le B,
\]
we have
\[
|\phi(x)+\phi'(x)-2\bar\phi(x)|
\le
|\phi(x)|+|\phi'(x)|+2|\bar\phi(x)|
\le 4B.
\]
Hence
\begin{align*}
\left|L(\bar\phi;\phi,p)-L(\bar\phi;\phi',p)\right|
&\le
4B\sum_{x\in\mathcal X}p(x)|\phi(x)-\phi'(x)|
\\
&\le
4B\sup_{x\in\mathcal X}|\phi(x)-\phi'(x)|
\\
&=
4B\,\|\phi-\phi'\|.
\end{align*}
Taking the supremum over \(\bar\phi\in\bar\Phi_K(B)\) yields
\[
\sup_{\bar\phi\in\bar\Phi_K}
\left|L(\bar\phi;\phi,p)-L(\bar\phi;\phi',p)\right|
\le
4B\,\|\phi-\phi'\|,
\]
so Condition 3 holds with \(C \equiv 4B\).

\subsection{Additional Illustrative Example} \label{subsec:dp-illustration-Example2} 

The Atlantic Causal Inference Conference (ACIC) 2016 data are a semi-synthetic benchmark constructed by \citet{Dorie2019}. The design combines real covariates drawn from a study of maternal and infant health called the Collaborative Perinatal Project (CPP) (see, for example \cite{niswander1972} and \cite{klebanoff2009}) with simulated treatment assignments and potential outcomes designed to emulate observational causal inference settings.\footnote{The ACIC 2016 competition data are available through the \texttt{aciccomp2016} package and the associated repository. The competition design and the data-generating processes are described in \cite{Dorie2019}}. In the ACIC 2016 data, the covariates are selected as plausible confounders for a hypothetical study of the effect of infant birth weight on a child's Intelligence Quotient (IQ) \citep{Dorie2019}. In the public release, these variables are anonymized and reported as $x_1,\dots,x_{58}$, but Appendix A.1, Table 5 of \citet{Naghi2021} provides a table of the labels associated to all of the CPP covariates. The retained variables span maternal demographics and socioeconomic characteristics (for example, maternal age, race, education, and family income), pregnancy and birth conditions (for example, gestation at delivery, placental weight, cord length, and Apgar scores), and infant health measures (for example, bilirubin, hematocrit, and hemoglobin). The original ACIC 2016 data contained 77 data-generating processes. We focus on the data generating process \#15 from Appendix A.1, Table 2 of \citet{Dorie2019}. This setting is supposed to feature high treatment-effect heterogeneity. The released data provides the true outcomes of interest $(\mu_0(x),\mu_1(x))$ for each unit, which we use to construct the policy effects of interest $\phi(x):=\mu_1(x)-\mu_0(x)$.

In this example \(|\mathcal X|=4{,}802\), and the number of different heterogeneous treatment effects is \(N=4{,}704\). We choose to summarize \(\phi\) using \(K=10\). The exact dynamic program yields \(L_{\mathrm{oracle}}=0.081\) and has a runtime of \(0.1485\) seconds.  As explained before, the ACIC 2016 data is a hypothetical study of the effect of birth weight on IQ, thus we think of the outcomes encoded in $\phi$ as being expressed in terms of the gains in IQ due to ``high'' birth weight. The minimum value of $\phi$ in this example is .599 and the largest value is 21.267. The 5\% and 95\% quantiles are 1.250 and 6.843, respectively. The average value of $\phi$ is 2.928. The variance is 4.33. These numbers suggest that there is indeed a fair amount of heterogeneity in $\phi$. 

Figure \ref{fig:acic-kmeans-sorted-variance} shows that the $K=10$ clusters obtained by the dynamic programming algorithm successfully cluster observations into more homogeneous groups. It is helpful to compare the clusters which the smallest and largest outcomes, which exhibit markedly different properties. The cluster with the largest outcomes has the largest within-cluster variance, which is about 3 and almost has the same magnitude as the original variance. The within-cluster mean for this group---the function $\bar{\phi}(x)$ for any $x$ in the last cluster---is 16.87. The variance of this cluster is about 3.02. The size of this cluster is fairly small: it only contains .3\% of the data. The values of $\phi$ in the first (left-most) cluster range from .599 to 1.586. The within-cluster mean for this group is 1.34. The variance of the first cluster is about 0.03 (less than 1\% of the original variance) and it contains about 30\% of the data.

\begin{figure}[!htbp]
    \centering
    \includegraphics[scale=.7, width=\textwidth]{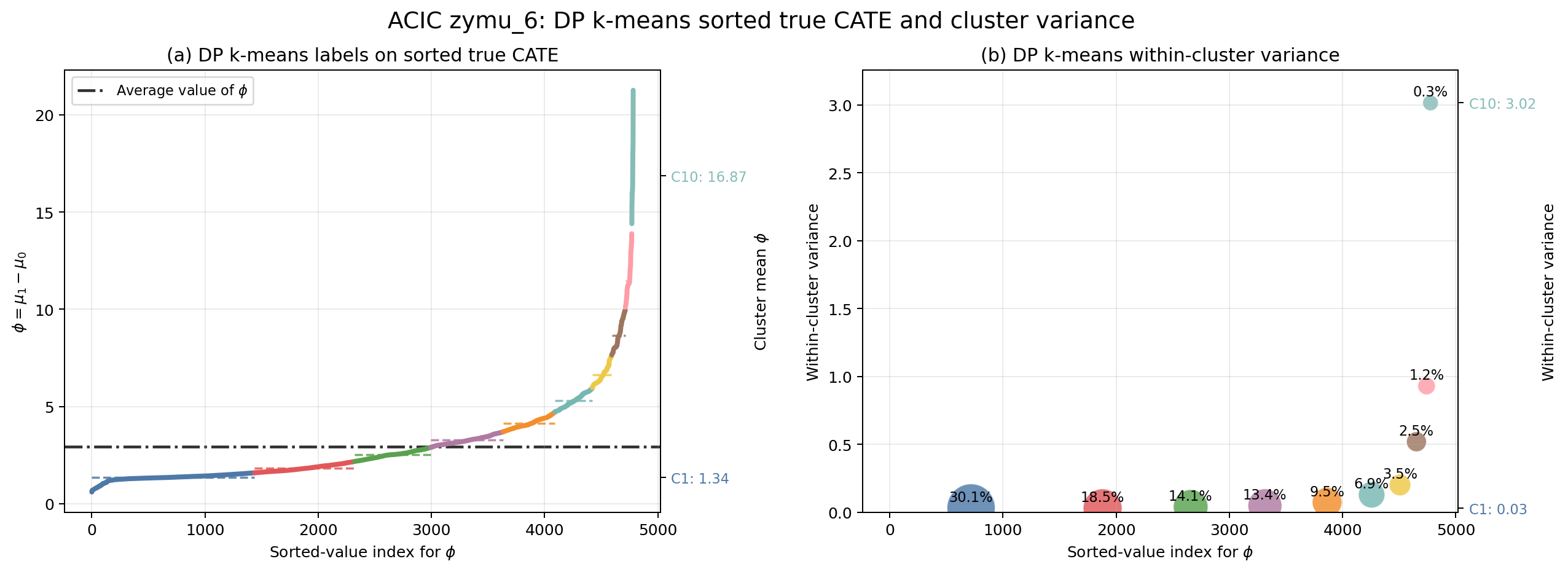}
    \caption{Solution to the archetype discovery problem for the ACIC 2016 data.}
    \label{fig:acic-kmeans-sorted-variance}
\end{figure}

\end{document}